\title{Matching Patterns with Variables Under Simon's Congruence}
\author{Pamela Fleischmann\inst{1} \and Sungmin Kim\inst{3} \and Tore Koß\inst{2} \and Florin Manea\inst{2} \and Dirk Nowotka\inst{1} \and Stefan Siemer\inst{2} \and Max Wiedenhöft\inst{1}}
\authorrunning{P. Fleischmann et al.}
\institute{Department of Computer Science, Kiel University, Germany, \email{$\{$fpa,dn,maw$\}$@informatik.uni-kiel.de}
\and Department of Computer Science, University of Göttingen, Germany \email{$\{$tore.koss,florin.manea,stefan.siemer$\}$@cs.uni-goettingen.de}
\and Department of Computer Science, Yonsei University, Republic of Korea \email{rena\_rio@yonsei.ac.kr}}
\date{June 2023}
\newcommand{\alphabetOf}[1]{\texttt{\upshape alph}(#1)}
\newcommand{\universalityIdx}[2]{\iota_{#1}(#2)}
\newcommand{\Xrank}[3]{\texttt{\upshape X}(#1,#2,#3)}
\newcommand{\subseq}[2]{#1\preceq#2}
\newcommand{\subseqset}[2]{\mathbb{S}_{#1}(#2)}
\newcommand{\rest}[2]{\texttt{\upshape rest}_{#1}(#2)}
\newcommand{\arch}[2]{\texttt{\upshape arch}_{#1}(#2)}
\newcommand{\archEnd}[3]{\texttt{\upshape archEnd}_{#1,#2}(#3)}
\newcommand{\marginalSeq}[2]{\texttt{\upshape M}_{#1}(#2)}
\newcommand{\subseqUnivSign}[1]{\texttt{\upshape s}(#1)}
\newcommand{\myKarr}[1]{\mathcal{K}}
\newcommand{\myRarr}[1]{\mathcal{R}}
\newcommand{\alphRest}[1]{\myRarr{}[#1]}
\newcommand{\set}[1]{\lbrace #1 \rbrace}
\newcommand{\match}{\mathtt{Match}}
\newcommand{\simmatch}{\mathtt{MatchSimon}}
\newcommand{\simStrictMatch}{\mathtt{MatchStrictSimon}}
\newcommand{\univmatch}{\mathtt{MatchUniv}}
\newcommand{\simWE}{\mathtt{WESimon}}
\newcommand{\simStrictWE}{\mathtt{WEStrictSimon}}
\newcommand{\sat}{\mathtt{3CNFSAT}}
\newcommand{\var}{\mathtt{var}}
\newcommand{\term}{\mathtt{term}}
\def\nth#1{#1$^{\tiny th}$}
\def\N{\mathbb{N}}
\def\ta{\mathtt{a}}
\DeclareMathOperator*{\dollar}{\texttt{\$}}
\DeclareMathOperator*{\hashtag}{\texttt{\#}}
\DeclareMathOperator{\emptyword}{\varepsilon}
\newcommand{\problemtitle}[1]{\gdef\@problemtitle{#1}}% Store problem title
\newcommand{\probleminput}[1]{\gdef\@probleminput{#1}}% Store problem input
\newcommand{\problemquestion}[1]{\gdef\@problemquestion{#1}}% Store problem question
  \par\addvspace{.5\baselineskip}
  \par\addvspace{.5\baselineskip}
\tikzset{ns/.style={
        inner sep = 0mm, outer sep=0mm,
        append after command={
            [very thick]
            (\tikzlastnode.north west)edge(\tikzlastnode.north east)
            [very thick]
            (\tikzlastnode.south west)edge(\tikzlastnode.south east)
        }
    }
}
\tikzset{wens/.style={
        inner sep = 0mm, outer sep=0mm,
        append after command={
            [very thick]
            (\tikzlastnode.north west)edge(\tikzlastnode.south west)
            [very thick]
            (\tikzlastnode.north east)edge(\tikzlastnode.south east)
            [very thick]
            (\tikzlastnode.north west)edge(\tikzlastnode.north east)
            [very thick]
            (\tikzlastnode.south west)edge(\tikzlastnode.south east)
        }
    }
}
\tikzset{wns/.style={
        inner sep = 0mm, outer sep=0mm,
        append after command={
            [very thick]
            (\tikzlastnode.north west)edge(\tikzlastnode.south west)
            [very thick]
            (\tikzlastnode.north west)edge(\tikzlastnode.north east)
            [very thick]
            (\tikzlastnode.south west)edge(\tikzlastnode.south east)
        }
    }
}
\tikzset{ens/.style={
        inner sep = 0mm, outer sep=0mm,
        append after command={
            [very thick]
            (\tikzlastnode.north east)edge(\tikzlastnode.south east)
            [very thick]
            (\tikzlastnode.north west)edge(\tikzlastnode.north east)
            [very thick]
            (\tikzlastnode.south west)edge(\tikzlastnode.south east)
        }
    }
}
\tikzset{e/.style={
        inner sep = 0mm, outer sep=0mm,
        append after command={
            [very thick]
            (\tikzlastnode.north east)edge(\tikzlastnode.south east)
            [very thick]
        }
    }
}
\tikzset{w/.style={
        inner sep = 0mm, outer sep=0mm,
        append after command={
            [very thick]
            (\tikzlastnode.north west)edge(\tikzlastnode.south west)
            [very thick]
        }
    }
}	
\begin{document}

\maketitle

\begin{abstract}
We introduce and investigate a series of matching problems for patterns with variables under Simon's congruence. Our results provide a thorough picture of these problems' computational complexity. \end{abstract}

%\section*{Outline}

\section{Introduction}\label{section:introduction}

A \emph{pattern with variables} is a string $\alpha$ consisting of \emph{constant letters} (or {\em terminals}) from a finite alphabet $\Sigma=\{1,\ldots,\sigma\}$, of size $\sigma\geq 2$, and \emph{variables} from a potentially infinite set $\mathcal{X}$, with $\Sigma\cap \mathcal{X} = \emptyset$. Such a pattern $\alpha$ is mapped by a function $h$, called {\em substitution}, to a word by substituting the variables occurring in $\alpha$ by arbitrary strings of constants, i.e., strings over $\Sigma $. For example, the pattern $\alpha=x x \mathtt{abab} y y $ can be mapped to the string of constants $\mathtt{aaaa abab bb}$ by the substitution $h$ defined by $h(x)=\mathtt{aa}, h(y)=\mathtt{b}$. In this framework, $h(\alpha)$ denotes the word obtained by substituting every occurrence of a variable $x$ in $\alpha$ by $h(x)$ and leaving all the constants unchanged. If a pattern $\alpha$ can be mapped to a string of constants $w$, we say that $\alpha$ matches $w$;
the problem of deciding, given a pattern $\alpha$ with variables and a string of constants $w$, whether there exists a substitution which maps $\alpha$ to $w$ is called the {\em (exact) matching problem}, $\match$. \looseness=-1

\begin{problemdescription}
  \problemtitle{Exact Matching Problem: $\match(\alpha,w)$}
  \probleminput{Pattern $\alpha$, $|\alpha|=m$, word $w$, $|w|=n$.}
  \problemquestion{Is there a substitution $h$ with $h(\alpha) = w$?}
\end{problemdescription}

$\match$ is a heavily studied problem, which appears frequently in various areas of theoretical computer science. Initially, this problem was considered in language theory (e.g., pattern languages~\cite{DBLP:journals/jcss/Angluin80}) or combinatorics on words (e.g., unavoidable patterns~\cite{Loth02}), with connections to algorithmic learning theory (e.g., the theory of descriptive patterns for finite sets of words~\cite{DBLP:journals/jcss/Angluin80,shi:pat,DBLP:journals/tcs/FernauMMS18}), and has by now found interesting applications in string solving and the theory of word equations (\cite{lothaire}), stringology (e.g., generalised function matching~\cite{ami:gen}), the theory of extended regular expressions with backreferences~\cite{cam:afo,fri:mas,Fre2013,FreydenbergerSchmid2019}), or database theory (mainly in relation to document spanners~\cite{FreydenbergerHolldack2018,Freydenberger2019,FaginEtAl2015,SchmidSchweikardt2021,FreydenbergerP21,SchmidICDT2022,SchmidSchweikardtPODS2022}).\looseness=-1

$\match$ is \NP-complete in general \cite{DBLP:journals/jcss/Angluin80}, and a more detailed image of the parameterised complexity of the matching problem is revealed in~\cite{ReidenbachS14,shi:pol2,FerSch2015,DBLP:journals/mst/FernauSV16,DBLP:journals/toct/FernauMMS20,schmid13} and the references therein. A series of classes of patterns, defined by structural restrictions, for which $\match$ is in \P~were identified~\cite{ReidenbachS14,DayFMN17,DBLP:journals/toct/FernauMMS20}; moreover, for most of these classes, $\match$ is $W[1]$-hard \cite{w1hardness} with respect to the structural parameters used to define the respective classes.  
%The complexity of the $\match$ problem is generally quite well understood. To begin with, $\match$ is \NP-complete \cite{DBLP:journals/jcss/Angluin80} if the input string of constants contains at least two distinct letters. A more detailed image of the parameterised complexity of the matching problem was revealed in~\cite{ReidenbachS14,shi:pol2,FerSch2015,DBLP:journals/mst/FernauSV16,DBLP:journals/toct/FernauMMS20,schmid13} and the references therein. Moreover, these papers also identified several subclasses of patterns, defined according to relevant structural parameters, for which it was shown that the matching problem can be solved in polynomial time when the respective parameters of the input pattern are upper bounded by constants. Prominent examples in this direction are discussed in detail in~\cite{ReidenbachS14,DayFMN17,DBLP:journals/toct/FernauMMS20}, alongside efficient algorithms solving the $\match$ problem restricted to (or, in other words, parameterised by) such subclasses of patterns. In general, each of the structural parameters defining such classes $P$ is a number $k$ characterising in some way the structure of the patterns of the class $P$ and the matching algorithms for the respective class of patterns runs in $O(n^{ck})$ for some constant $c$; easy examples in this direction are the length of the pattern or, more interestingly, the number of distinct variables occurring in the pattern. Moreover, these restricted matching problems are usually shown to be $W[1]$-hard \cite{w1hardness} w.r.t. the respective structural parameters.  
Recently, Gawrychowski et al.~\cite{mfcs2021,spire2022} studied $\match$ in an approximate setting: given a pattern $\alpha$, a word $w$, and a natural number $\ell$, one has to decide if there exists a substitution $h$ such that $D(h(\alpha),w)\leq \ell$, where $D$ is either the Hamming  \cite{mfcs2021} or the edit distance \cite{spire2022}. Their results offered, once more, a detailed understanding of the approached matching problems' complexity (in general, and for classes of patterns defined by structural restrictions). \looseness=-1
%In two recent papers, Gawrychowski et al.~\cite{mfcs2021,spire2022} have approached the matching problem $\match$ in an approximate setting. Basically, one is given a pattern $\alpha$, a word $w$, and a natural number $\ell$ and has to decide whether there exists a substitution $h$ such that $D(h(\alpha),w)\leq \ell$, where $D$ is either the Hamming distance \cite{mfcs2021} or the edit distance \cite{spire2022}. These results offered, once more, a complete and very detailed understanding of the approached matching problems (in general, and for classes of patterns defined by structural restrictions).
The problems discussed in~\cite{mfcs2021,spire2022} can be seen in a more general setting: given a pattern $\alpha$ and a word $w$, decide if there exists a substitution $h$ such that $h(\alpha)$ is similar to $w$, with respect to some similarity measure (Hamming resp. edit distance in ~\cite{mfcs2021,spire2022} or string equality for exact $\match$).
%, the similarity measure was defined based on string distances (Hamming or edit distance); the exact %$\match$ problem can also be formulated in this setting, by taking string equality as the similarity %measure. 
Thus, it seems natural to also consider various other string-equivalence relations as similarity measures, such as ($k$-)abelian equivalence \cite{KarhumakiSZ13,KarhumakiSZ17} or $k$-binomial equivalence\cite{RigoS15,FreydenbergerGK15,LejeuneRR20}. Here, we consider an approximate variant of $\match$ using Simon's congruence $\sim_k$ \cite{Simon75}.
%, where the similarity of strings is defined using Simon's congruence $\sim_k$ \cite{Simon75}.  \looseness=-1

%The work of Gawrychowski et al.~\cite{mfcs2021,spire2022} can be seen in a more general setting. More precisely, the problems discussed in these papers, as well as the exact variant of $\match$ defined above, can be seen as particular cases of the following question: given a pattern $\alpha$ and a word $w$, decide whether there exists a substitution $h$ such that $h(\alpha)$ is similar to $w$, w.r.t. some similarity measure. Natural choices for such a similarity measure are, of course, string distances such as the Hamming distance or the edit distance; in the case of the exact $\match$ problem, this similarity measure is extreme: equality of strings. However, it is natural to also consider various string-equivalence relations as a similarity measure between strings, such as the very classical abelian equivalence (see \cite{KarhumakiSZ13} and the references therein), $k$-abelian equivalence \cite{KarhumakiSZ13,KarhumakiSZ17}, and $k$-binomial equivalence\cite{RigoS15,FreydenbergerGK15,LejeuneRR20}. Based on this idea, we introduce an approximate variant of $\match$, where the similarity of strings is defined using Simon's congruence $\sim_k$ \cite{Simon75}.  

\begin{problemdescription}
  \problemtitle{Matching under Simon's Congruence: $\simmatch(\alpha,w,k)$}
  \probleminput{Pattern $\alpha$, $|\alpha|=m$, word $w$, $|w|=n$, and number $k\in [n]$.}
  \problemquestion{Is there a substitution $h$ with $h(\alpha) \sim_k w$?}
\end{problemdescription}

Let us recall the definition of Simon's congruence. A string~$u$ is a \emph{subsequence} of a  string~$v$ if $u$ results from $v$ by deleting some letters of $v$. Subsequences are well studied in the area of combinatorics of words and combinatorial pattern matching, and are well-connected to other areas of computer science (e.g., the handbook \cite{lothaire} or the survey \cite{Kosche2022SubsequenceSurvey} and the references therein). Let $\subseqset{k}{v}$ be the set of all  subsequences of a given string $v$ up to length $k\in\N_0.$ Two strings~$v$ and $v'$ are $k$-Simon congruent iff $\subseqset{k}{v} = \subseqset{k}{v'}.$ The problem of testing whether two given strings are $k$-Simon congruent, for a given $k$, was introduced by Imre Simon in his PhD thesis \cite{simonPhD} as a similarity measure for strings, and was intensely studied in the combinatorial pattern matching community (see \cite{Hebrard91,garelCPM,SimonWords,DBLP:conf/wia/Tronicek02,CrochemoreMT03,FleischerK18} and the references therein), before being optimally solved in \cite{BarkerFHMN20,GawrychowskiKKM21}. Another interesting extension of these results, discussed in \cite{patternMatchingSimon}, brings us closer to the focus of this paper. There, the authors  present an efficient solution for the following problem: given two words $w,u$ and a natural number $k$, decide whether there exists a factor of $w$ which is $k$-Simon congruent to $u$; this is $\simmatch$ with the input pattern $\alpha=x u y$ for variables $x,y$. Thus, it seems natural to consider, in a general setting, the problem of checking whether one can map a given pattern $\alpha$ to a string which is similar to $w$ with respect to $\sim_k$. Moreover, there is another way to look at this problem, which seems interesting to us: the input word $w$ and the number $k$ are a succinct representation of  $\subseqset{k}{w}$. So, $\simmatch(\alpha,w,k)$ asks whether or not we can assign the variables of $\alpha$ in such a way that we reach a word describing the target set of subsequences of length $k$, as well. \looseness=-1

One of the congurence-classes of $\Sigma^*$ w.r.t. $\sim_k$ received a lot of attention: the class of $k$-subsequence universal words, those words which contain all $k$-length words as subsequences. This class was first studied in \cite{karandikar2016height,schnoebelen2019height}, and further investigated in~\cite{BarkerFHMN20,day2021edit,fleischmann2021scattered,kosche2021absent,adamson2023words,Goettingen2023words,SchnoebelenV23,fleischmann2023alphabetafactorization} in contexts related to and motivated by formal languages, automata theory, or combinatorics, where the notion of universality is central (see   \cite{Rampersad:2012,KrotzschMT17,GawrychowskiRSS17,martin1934,Bruijn46,ChenKMS17,GoecknerGHKKKS18} for examples in this direction). The motivation of studying $k$-subsequence universal words is thoroughly discussed in \cite{day2021edit}. Here, we consider the following problem: \looseness=-1

\begin{problemdescription}
  \problemtitle{Matching a Target Universality: $\univmatch(\alpha,k)$}
  \probleminput{Pattern $\alpha$, $|\alpha|=m$, and $k\in \N_0$.}
  \problemquestion{Is there a substitution $h$ with $\iota(h(\alpha))=k$?}
\end{problemdescription}
In this problem, $\iota(w)$ (the universality index of $w$) is the largest integer $\ell$ for which $w$ is $\ell$-subsequence universal. Note that $\univmatch$ can be formulated in terms of $\simmatch$: the answer to $\univmatch(\alpha,k)$ is yes if and only if the answer to $\simmatch(\alpha, (1\cdots \sigma)^k, k)$ is yes and the answer to $\simmatch(\alpha$, $(1\cdots \sigma)^{k+1}, k+1)$ is no. However, there is an important difference: for $\univmatch$ we are not explicitly given the target word $w$, whose set of $k$-length subsequences we want to reach; instead, we are given the number $k$ which represents the target set more compactly (using only $\log k$ bits). \looseness=-1

In the problems introduced above, we attempt to match (or reach), starting with a pattern $\alpha$, the set of subsequences defined by a given word $w$ (given explicitly or implicitly). A well-studied extension of $\match$ is the satisfiability problem for word equations, where we are given two patterns $\alpha$ and $\beta$ and are interested in finding an assignment of the variables that maps both patterns to the same word (see, e.g., \cite{lothaire}). This problem is central both to combinatorics on words and to the applied area of string solving \cite{Amadini2020,Hague19}. In this paper, we extend $\simmatch$ to the problem of solving word equations under $\sim_k$, defined as follows.
\begin{problemdescription}
  \problemtitle{Word Equations under Simon's Congruence: $\simWE(\alpha,\beta,k)$}
  \probleminput{Patterns $\alpha$, $\beta$, $|\alpha|=m$, $|\beta|=n$, and $k\in [m+n]$.}
  \problemquestion{Is there a substitution $h$ with $h(\alpha) \sim_k h(\beta)$?}
\end{problemdescription}

Besides introducing these natural problems, our paper presents a rather comprehensive picture of their computational complexity. We start with $\univmatch$, the most particular of them and whose input is given in the most compact way. In Section \ref{section:patternMatchingUniversality} we show that $\univmatch$ is \NP-complete, and also present a series of structurally restricted classes of patterns, for which it can be solved in polynomial time. In Section \ref{section:patternMatchingSimk}, we approach $\simmatch$ and show that it is also \NP-complete; some other variants of this problem, both tractable and intractable, are also discussed. Finally, in Section \ref{section:wordEqSimk}, we discuss $\simWE$ and its variants, and characterise their computational complexity. The paper ends with a section pointing to a series of future research directions.

\section{Preliminaries} \label{section:preliminaries}
%patterns and variables

Let $\mathbb{N} = \{1, 2, \ldots\}$ be the set of natural numbers. Let $[n] = \{1, \ldots, n\}$ and 
$[m:n]=[n]\setminus [m-1]$, for $m,n \in \mathbb{N}, m < n$. $\N_0$ denotes $\mathbb{N}\cup\{0\}$.\looseness=-1

For a finite set $\Sigma = [\sigma]$ called {\em alphabet}, $\Sigma^*$ denotes the set of all words (or strings) over $\Sigma$, with $\varepsilon$ denoting the empty word. For $w \in \Sigma^*$, $|w|$ denotes its length, while $|w|_\mathtt{a}$ denotes the number of occurrences of $\mathtt{a}\in \Sigma$ in $w$. Further, $\Sigma^{\leq k}$ (resp. $\Sigma^k$) denotes the set of all words over $\Sigma$ up to (resp. of) length $k\in\mathbb{N}$. 
Let $w[i]$ denote the \nth{$i$} letter in the string~$w$, and let $\alphabetOf{w}=\{\mathtt{a}\mid |w|_{\mathtt{a}}\geq 1 \}$ denote the set of different letters in $w$. 
To access the first occurrence of a letter $\ta\in\Sigma$ after a position $i\in[|w|]$ in a  word $w\in\Sigma^{\ast}$, define the {\em X-ranker} as a mapping $\mathtt{X}:\Sigma^{\ast}\times([|w|]\cup\{0,\infty\})\times \Sigma\rightarrow [|w|]\cup\{\infty\}$ with $(w,i,\mathtt{a})\mapsto
\min(\{j\in[i+1:|w|]\mid w[j]=\mathtt{a}\}\cup\{\infty\})$ (cf.\cite{DBLP:journals/corr/abs-0907-0616}). Notice that a lookup table
for all possible X-ranker evaluations for some given $w\in\Sigma^{\ast}$ can be computed in linear time in $|w|$, where each item can be accessed in constant time~\cite{FleischerK18,BarkerFHMN20}. In the special case of $\Xrank{w}{0}{\ta}$, we call this occurrence of $\ta$ the {\em signature letter} $\ta$ of $w$, for all $\ta\in\alphabetOf{w}$. 
A \emph{permutation}~$\gamma$ of an alphabet~$\Sigma$
is a string in $\Sigma^\sigma$
with $\alphabetOf{\gamma}=\Sigma$.
A string~$u$ is a \emph{subsequence} of a string~$w$
if there exists a strictly increasing
integer sequence~$0<i_1<i_2<\ldots <i_{|u|}\le |w|$
with $w[i_j]=u[j]$ for all $j\in[|u|]$.
For a given $k\in\mathbb{N}_0$, we use $\subseqset{k}{w}$  as
the set of all subsequences of $w$ with length at most $k$.
%If $u$ is a subsequence of $w$, we write $\subseq{u}{w}$ and define, for a given $k\in\mathbb{N}_0$, 
%$\subseqset{k}{w}=\{u\in\Sigma^{\le k}\mid \subseq{u}{w}\}$  as
%the set of all subsequences of $w$ with length at most $k$.
A subsequence $u$ of $w$ is called a {\em substring} of $w$ if there exists a position $i$ of $w$ such that $u=w[i]w[i+1]\cdots w[i+|u|-1]$. We write $w[i:j]$ for $w[i]w[i+1]\cdots w[j]$ for $1\leq i\leq j\leq|w|$.
Substrings $w[1:j]$ (resp., $w[i:|w|]$) are called \emph{prefixes} (resp., \emph{suffixes}) of $w$.

%In the following, we introduce a series of notions related to the combinatorics of subsequences. 
Two words $w_1,w_2\in\Sigma^{\ast}$ are called {\em Simon $k$-congruent} ($w_1\sim_k w_2$) if $\subseqset{k}{w_1}=\subseqset{k}{w_2}$ {\upshape \cite{Simon75}}. A word $w\in\Sigma^{\ast}$ is called {\em $k$-subsequence universal} (or $k$-universal for short) for some $k\in\N$ if $\subseqset{k}{w}=\Sigma^{\leq k}$; this means that $w\sim_k (1\cdots \sigma)^k $. The largest $k\in\N_0$ such that $w$ is $k$-universal is the {\em universality index} of $w$, denoted by $\iota(w)$. In \cite{Hebrard91}, Hébrard introduced the following unique factorisation of words.\looseness=-1

\begin{definition}
The {\em arch factorisation} of a word $w\in\Sigma^{\ast}$ is defined by $w=\arch{1}{w}\cdots\arch{k}{w}\rest{}{w}$ for some $k\in\N_0$ such that there exists a sequence $(i_j)_{j\leq k}$ with $i_0=0$,
    $i_j=\max\{\Xrank{w}{i_{j-1}}{\mathtt{a}}\mid \mathtt{a}\in\Sigma\}$ for all $j\ge 1$,
    $\arch{j}{w}=w[i_{j-1}+1:i_j]$ whenever $1\le i_j < \infty$, and $\rest{}{w}=w[i_j:|w|]$, if $i_{j+1}=\infty$.
\end{definition}

Clearly, the number of arches of $w\in\Sigma^{\ast}$ is exactly $\iota(w)$.
Extending the notion of arch factorisation,
we define the arches and rest of $w\in\Sigma^{\ast}$
for~$\mathtt{a}\in \alphabetOf{w}$
(cf. the arch jumping functions introduced in~\cite{SchnoebelenV23}) as well as the universality index
for the respective letter~$\mathtt{a}$.
That is, we perform the arch factorisation
and obtain the universality index
for the suffix of $w$
that starts after the first occurrence
of~$\mathtt{a}$.\looseness=-1

\begin{definition}
Let $w\in\Sigma^{\ast}$, $\ta\in\alphabetOf{w}$, and $j\in[\iota(w)]$. The arches of signature letters are defined by $\arch{\mathtt{a},j}{w}=\arch{j}{w[\Xrank{w}{0}{\mathtt{a}}+1:|w|]}$ and $\rest{\mathtt{a}}{w}=\rest{}{w[\Xrank{w}{0}{\mathtt{a}}+1:|w|]}$.
The {\em universality index of $\ta$} is $\universalityIdx{\mathtt{a}}{w}=\universalityIdx{}{w[X(w,0,\mathtt{a})+1:|w|]}$. The last index with respect to $w$ of $\arch{\mathtt{a},j}{w}$ is defined as $\archEnd{\mathtt{a}}{j}{w}=\Xrank{w}{0}{\ta} + \sum_{i=1}^j|\arch{\mathtt{a},i}{w}|$.
\end{definition}

Now, we are interested in the smallest substrings of $w$ that allow the completion of rests of specific prefixes of $w$ to full arches.
Hence, we define {\em marginal sequences}, which are breadth-first orderings 
of $\sigma$ parallel arch factorisations,
each starting after a signature letter of the word.

\begin{figure}
    \centering
\includegraphics[scale=1]{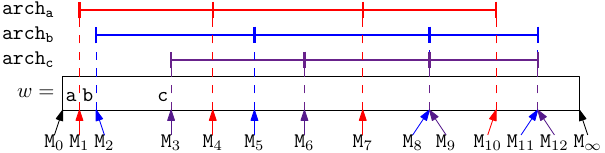}
    \caption{The marginal sequence of a word.}
    \label{fig:margSeq}
\end{figure}

\begin{restatable}[]{definition}{refmarginalSeq}\label{def:marginalSeq}
%marginal sequence
Let $w\in\Sigma^{\ast}$ and $\gamma$ be a permutation of $\Sigma$ such that $\Xrank{w}{0}{\gamma[i]}$
is increasing w.r.t. $i\in[\sigma]$.
From the arches for signature letters,
we define the \emph{marginal sequence} of integers
of $w\in\Sigma^{\ast}$ inductively by $\marginalSeq{0}{w}=0$, $\marginalSeq{i}{w}=\Xrank{w}{0}{\gamma[i]}$ for all $i\in[\sigma]$, and $\marginalSeq{i\sigma+j}{w}=\archEnd{\gamma[j]}{i}{w}$
    for $j\in[\sigma]$, $i\in[\universalityIdx{\gamma[j]}{w}]$. Let $\marginalSeq{\infty}{w}=|w|$
    denote the last element of the sequence. 
\end{restatable}

The sequence is called {\em marginal} because, for $j\in[\sigma]$, $w[\marginalSeq{i\sigma+j-1}{w}+1:\marginalSeq{i\sigma+j}{w}]$ is the smallest prefix $p$ of $w[\marginalSeq{i\sigma+j-1}{w}+1:|w|]$ such that $\universalityIdx{\gamma[j]}{w[1:\marginalSeq{i\sigma+j-1}{w}]p }=i$. %\looseness=-1
%\begin{observation}\label{obs:archMonotonicity}
    Note that the marginal sequence~$\marginalSeq{i}{w}$
    %for a string~$w$ 
    is non-decreasing.
%\end{observation}
%universality signatures
In the following, we define a slight variation of the {\em subsequence universality signature} $\subseqUnivSign{w}$ introduced in~\cite{SchnoebelenV23}.

\begin{definition}
1. For $w\in\Sigma^{\ast}$, the \emph{subsequence universality signature} $\subseqUnivSign{w}$
of $w$ is defined as the $3$-tuple~$(\gamma,\myKarr{},\myRarr{})$ with a permutation $\gamma$ of $\alphabetOf{w}$,
    where $\Xrank{w}{0}{\gamma[i]}$ $>\Xrank{w}{0}{\gamma[j]} \Leftrightarrow i > j$ ($\gamma$ consists of the letters of $\alphabetOf{w}$ in order of their first appearance in $w$) and two arrays $\myKarr{}$ and $\myRarr{}$ of length $\sigma$ with $\myKarr{}[i]=\universalityIdx{\gamma[i]}{w}$ and 
    $\alphRest{i}=\alphabetOf{\rest{\gamma[i]}{w}}$ for all $i\in[|\alphabetOf{w}|]$.
    For all $i\in[\sigma]\setminus \alphabetOf{w}$,
    we have $\myRarr{}[i]=\Sigma$
    and $\myKarr{}[i]=-\infty$. \\
    %, which are values that cannot appear otherwise. \\
2. Conversely,
for a permutation~$\gamma'$ of $\Sigma$,
an integer array~$\myKarr{}'$
and an alphabet array~$\myRarr{}'$ both of length $\sigma$,
we say that the tuple~$(\gamma',\myKarr{}',\myRarr{}')$ is a \emph{valid} signature 
if there exists a string~$w$
that satisfies $\subseqUnivSign{w}=(\gamma,'\myKarr{}',\myRarr{}')$.    
\end{definition}
Note that, for $k_i=\universalityIdx{\gamma[i]}{w}$,
we have $\myRarr{}[i]
=\alphabetOf{w[\marginalSeq{k_i\sigma + i}{w}+1:
\marginalSeq{\infty}{w}]}$,
since $\rest{\gamma[i]}{w}=
w[\marginalSeq{k_i\sigma + i}{w}+1:
\marginalSeq{\infty}{w}]$.

A central notion to this work is that of patterns with variables. From now on, we consider two alphabets: $\Sigma = [\sigma]$ is an alphabet of constants (or terminals), and ${\mathcal X}$ a (possibly infinite) alphabet of variables, with ${\mathcal X}\cap \Sigma =\emptyset$. A pattern $\alpha$ is a string from $({\mathcal X} \cup \Sigma)^*$, i.e., a string containing both constants and variables. For a pattern $\alpha$, $\var(\alpha)=\alphabetOf{\alpha}\cap {\mathcal X}$ denotes the set of {\em variables} in $\alpha$, while $\term(\alpha)=\alphabetOf{\alpha}\cap \Sigma$ is the set of {\em constants} ({\em terminals}) in $\alpha$.

\begin{definition}
A \emph{substitution}~$h:({\mathcal X}\cup \Sigma)^\ast\to\Sigma^\ast$ is a morphism that acts as the identity on $\Sigma$ and maps each variable of ${\mathcal X}$ to a (potentially empty) string over $\Sigma$. That is, $h(\ta)=\ta$ for all $\ta\in\Sigma$ and $h(x)\in\Sigma^{\ast}$ for all $x\in\mathcal{X}$. We say that pattern~$\alpha$ \emph{matches} string~$w$ over $\Sigma$ under a binary relation~$\sim$ if there exists a substitution~$h$ that satisfies $h(\alpha)\sim w$. 
\end{definition}
In the above definition, if $\sim$ is the string equality $=$, we say that the pattern~$\alpha$ \emph{matches} the string~$w$ instead of saying that $\alpha$ \emph{matches} $w$ under $=$. 

The problems addressed in this paper, introduced in Section \ref{section:introduction}, deal with  matching patterns to words under Simon's congruence~$\sim_k$. For these problems, the input consists of patterns, words, and a number $k$. In general, we assume that each letter of $\Sigma$ appears at least once, in at least one of the input patterns or words. Therefore, for input pattern $\alpha$ and word $w$ we assume that $\Sigma=\term(\alpha)\cup\alphabetOf{w}$. Hence, $\sigma$ is upper bounded by the total length of the input words and patterns. Similarly, the total number of variables occurring in the input patterns is upper bounded by the total length of these patterns. However, in this paper, although the number of variables is not restricted, we assume that $\sigma$ is a constant, i.e., $\sigma\in O(1)$. Clearly, the complexity lower bounds proven in this setting for the analysed problems are stronger while the upper bounds are weaker than in the general case, when no restriction is placed on $\sigma$. Note, however, that $\sigma\in O(1)$ is not an unusual assumption, being used in, e.g., \cite{FleischerK18}.

\medskip

The \textbf{computational model} we use is the Word RAM model with memory words of logarithmic size. This is a standard computational model in algorithm design in which, for an input of size $n$, the memory consists of memory-words consisting of $\Theta(\log n)$ bits. Basic operations (including arithmetic and bitwise Boolean operations) on memory-words take constant time, and any memory-word can be accessed in constant time. Numbers larger than $n$, with $\ell$ bits, are represented in $\Theta(\ell/\log n)$ memory words, and working with them takes time proportional to the number of memory words on which they are represented. In all the problems, we assume that we are given a pattern $\alpha$, with $|\alpha|=n$, over a constant size alphabet of constants $\Sigma=\{1,2,\ldots,\sigma\}$, with $\sigma\in O(1)$, and a set of variables $X \coloneqq \{x_1,\ldots, x_n\}$ that can be encoded as integers between $1$ and $\sigma + n$. That is, we assume that the processed patterns are sequences of integers (called letters or symbols), each fitting in $O(1)$ memory words. This is a common assumption in string algorithms: the input is said to be over {\em an integer alphabet}. For instance, the same assumption was also used for developing efficient algorithms for $\match$ in \cite{DBLP:journals/tcs/FernauMMS18,mfcs2021}. 
For a more detailed general discussion on this computational model see, e.g.,~\cite{crochemore}.

\section{$\univmatch$}\label{section:patternMatchingUniversality}

In this section, we discuss the $\univmatch$ problem. In this problem, we are given a pattern $\alpha$ and a natural number $k\leq n$, and we want to check the existence of a substitution $h$ with $\iota(h(\alpha))=k$. Note that $\iota(h(\alpha))=k$ means both that $h(\alpha)$ is $k$-universal and that it is not $(k+1)$-universal. A slightly relaxed version of the problem, where we would only ask for $h(\alpha)$ to be $k$-universal is trivial (and, therefore, not interesting): the answer, in that case, is always positive, as it is enough to map one of the variables of $\alpha$ to $(1\cdots \sigma)^k$. The main result of this section is that $\univmatch$ is \NP-complete, which we will show in the following.

To show that $\univmatch (\alpha, k)$ is \NP-hard, we reduce $\sat$ (3-satisfiability in conjunctive normal form) to $\univmatch (\alpha,k)$. 
%Our reduction starts by recalling the $\sat$ problem. 
We provide several gadgets allowing us to encode a $\sat$-instance $\varphi$ as an $\univmatch$-instance $(\alpha,k)$. Finally, we show that we can find a substitution $h$ for the instance $(\alpha,k)$, such that $\iota(h(\alpha)) = k$, if and only if $\varphi$ is satisfiable. We begin by recalling $\sat$.  
\begin{problemdescription}
  \problemtitle{3-Satisfiability for formulas in conjunctive normal form, $\sat$.}
  \probleminput{Clauses $\varphi \coloneqq \set{c_1,c_2,\ldots,c_m}$, where $c_j = (y_j^1 \lor y_j^2 \lor y_j^3)$ for $1\leq j \leq m$, and $y_j^1,y_j^2,y_j^3$ from a finite set of boolean variables $X \coloneqq \set{x_1,x_2, \ldots, x_n}$ and their negations $\bar{X} \coloneqq \set{\bar{x}_1,\bar{x}_2,\ldots,\bar{x}_n}$.}
  \problemquestion{Is there an assignment for $X$, which satisfies all clauses of $\varphi$?}
\end{problemdescription}

It is well-known that $\sat$ is \NP-complete (see \cite{Karp72,gar:com} for a proof). With this result at hand, we can prove the following lower bound. 

\begin{restatable}[]{lemma}{refmatchUnivnphard}\label{lemma:matchUnivnphard}
    $\univmatch$ is \NP-hard.
\end{restatable}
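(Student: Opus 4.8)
The plan is to reduce $\sat$ to $\univmatch$ by constructing, from a 3-CNF formula $\varphi$ with clauses $c_1,\dots,c_m$ over variables $x_1,\dots,x_n$, a pattern $\alpha$ and a target value $k$ such that $\varphi$ is satisfiable if and only if there is a substitution $h$ with $\iota(h(\alpha))=k$. The key design idea is that since $\iota$ counts the number of complete arches in the arch factorisation, we want the variables of $\alpha$ to ``cooperate'' to build exactly $k$ arches. We will use a constant-size terminal alphabet, say $\Sigma=\{1,\dots,\sigma\}$ with $\sigma$ fixed (this is allowed under the paper's standing assumption that $\sigma\in O(1)$), and encode each boolean variable $x_i$ of $\varphi$ by a pattern variable $X_i$ (and implicitly its negation), so that the substitution chosen for $X_i$ reflects a truth assignment. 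The overall pattern will look like a concatenation $\alpha = \beta_{\mathrm{pre}}\, \gamma_{x_1}\cdots\gamma_{x_n}\, \delta_{c_1}\cdots\delta_{c_m}\,\beta_{\mathrm{post}}$, where the $\gamma_{x_i}$-blocks are ``truth-setting'' gadgets and the $\delta_{c_j}$-blocks are ``clause-checking'' gadgets, with fixed terminal blocks serving as separators/padding so that arch boundaries land predictably.

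The core of the construction is a set of gadgets whose behaviour I would engineer as follows. For each boolean variable $x_i$, the block $\gamma_{x_i}$ should be built so that the pattern variable $X_i$ can be substituted in essentially two ``useful'' ways — a word corresponding to $x_i=\mathtt{true}$ and a word corresponding to $x_i=\mathtt{false}$ — and any other substitution is either equivalent to one of these or strictly worse in the sense that it cannot help reach the target universality $k$ (typically because it either destroys an arch we need or creates an extra arch we must avoid). The cleanest way to force this is to surround each occurrence of $X_i$ by carefully chosen terminal letters so that the only way to complete the current arch (i.e.\ contribute the ``missing'' letters of $\Sigma$ at that point) without overshooting is to pick one of the two canonical fillings; intuitively, one canonical filling supplies one missing letter, the other supplies a different one, and this choice propagates to the clause gadgets. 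Then each clause gadget $\delta_{c_j}=\delta(y_j^1\lor y_j^2\lor y_j^3)$ is designed so that, given the words already committed to $X_{i_1},X_{i_2},X_{i_3}$ for the three literals of $c_j$, the arch count contributed by $\delta_{c_j}$ is the ``good'' value (consistent with the global target $k$) precisely when at least one of the three literals is satisfied; if all three are false, $\delta_{c_j}$ is forced to contribute a wrong number of arches (one too many or one too few). Finally I set $k$ equal to the total number of arches forced by the terminal padding plus the ``correct'' contribution of each of the $n$ variable gadgets and each of the $m$ clause gadgets; since $k$ is polynomial in $m+n$ (indeed $k\le |\alpha|$), it fits the problem's requirement $k\le n$ in the $\univmatch$ formulation, and $\alpha$ and $k$ are clearly computable in polynomial time.

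With such gadgets in place, the equivalence proof splits into the two standard directions. If $\varphi$ has a satisfying assignment $\nu$, substitute each $X_i$ by the canonical word corresponding to $\nu(x_i)$; then every variable gadget contributes its intended arch count, and in every clause gadget at least one literal is satisfied so it also contributes the intended count, giving $\iota(h(\alpha))=k$ exactly (one must check both that $h(\alpha)$ is $k$-universal and that it is not $(k{+}1)$-universal, which is where the ``exactly'' in $\univmatch$ — as opposed to the trivial ``at least'' version discussed before the lemma — is used). Conversely, given any $h$ with $\iota(h(\alpha))=k$, I argue that for each $i$ the word $h(X_i)$ must be ``$\sim$-equivalent for our purposes'' to one of the two canonical words — otherwise the arch count over $\gamma_{x_i}$ together with the subsequent padding already deviates from the target, a deviation that cannot be corrected later because the remaining gadgets can only add further arches (or the padding enforces monotonicity of the running arch count). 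This yields a well-defined truth assignment $\nu$, and the clause-gadget property then forces every clause to be satisfied by $\nu$.

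The main obstacle is the design of the gadgets so that the arch factorisation behaves exactly as intended under \emph{all} substitutions, not just the canonical ones: because a substitution may insert an arbitrary word into $X_i$, one must ensure that ``cheating'' substitutions (e.g.\ very long words, words using unexpected letters, or words that merge/split arches across gadget boundaries) cannot accidentally hit the target $k$. I would control this by making the terminal padding between gadgets long enough and structured (e.g.\ blocks of the form $(1\cdots\sigma)^{t}$ for suitable $t$) that arch boundaries are ``pinned'' inside the padding regardless of what happens in the variable and clause gadgets, so that the total arch count is genuinely the sum of independent local contributions — reducing the global analysis to a finite case check per gadget. Verifying this independence/locality property carefully is the technical heart of the argument; once it is established, both directions of the reduction follow by routine bookkeeping of arch counts.
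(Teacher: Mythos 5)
Your overall strategy is the same as the paper's (a reduction from $\sat$ in which pattern variables encode truth values and the universality index counts arches contributed by truth-setting and clause gadgets), but the proposal stops short of being a proof: every load-bearing step is described as something you \emph{would} engineer, and none of the gadgets is actually constructed. The two places where this matters most are exactly the places you yourself flag as ``the technical heart.'' First, you never give a mechanism that prevents a cheating substitution from compensating across gadgets --- e.g.\ producing one arch too few in a variable gadget and one too many in a clause gadget, still summing to $k$. The paper's mechanism is concrete and worth knowing: each constraint gadget repeats the constrained variable $N^6$ times (with $N=n+m$), so that any forbidden letter in a variable's image immediately forces at least $N^6 > k$ arches in a \emph{prefix} of $h(\alpha)$; since $\iota$ is monotone under extension, this overshoot can never be corrected downstream, which is what makes the local analyses genuinely independent. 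Your proposed alternative (``padding long enough that arch boundaries are pinned'') is not obviously sufficient, because padding controls where arches close but not how many arches a pathological variable image creates inside a gadget.

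Second, your encoding of negation is too vague to work as stated. You use one pattern variable $X_i$ per boolean variable and say the negation is handled ``implicitly,'' but a clause gadget that must detect ``at least one literal true'' then has to distinguish, via arch counting alone, whether some $h(X_i)$ encodes \emph{false} --- and an arch-based gadget naturally detects the \emph{presence} of a letter, not its absence. The paper sidesteps this by introducing a second pattern variable $u_i$ for $\bar{x}_i$, adding an explicit complementation gadget $\dollar\, z_i u_i \hashtag$ (whose single arch forces one of $h(z_i), h(u_i)$ to contain a $0$ and the other a $1$), and then letting every clause gadget test only for the presence of a $1$ in one of its three (possibly negation-representing) variables. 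Without either this device or a worked-out substitute, the ``conversely'' direction of your argument does not go through. So: right approach, genuine gap --- the reduction is a plan whose correctness depends entirely on gadget details that are missing.
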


\begin{proof}
We reduce $\sat$ to $\univmatch (\alpha,k)$. Let us consider an instance of $\sat$: formula $\varphi$ given by $m$ clauses $\varphi \coloneqq \set{c_1,c_2,\ldots c_m}$ over $n$ variables $X \coloneqq \set{x_1,x_2, \ldots x_n}$ (for simplicity in notation we define $N = n+m$). We map this $\sat$ instance to an instance $(\alpha,k)$ of $\univmatch (\alpha,k)$ with $k = 5n + m + 2$, the alphabet $\Sigma \coloneqq \set{\mathtt{0,1,\hashtag,\dollar}}$ and the variable set~$\mathcal{X} \coloneqq \set{\mathtt{z_1,z_2,\ldots z_n, u_1,u_2, \ldots u_n}}$. More precisely, we want to show that there exists a substitution~$h$ to replace all the variables in $\alpha$ with constant words, such that $\iota(h(\alpha)) = 5n + m + 2$, if and only if the boolean formula $\varphi$ is satisfiable. Our construction can be performed in polynomial time and is of polynomial size with respect to $N$. To present this construction, we will go through its building blocks, the so-called gadgets.

%We start by defining a series of gadgets that help us put some structural restrictions on the substitutions. These restrictions fall into two In general, if these restrictions are violated, then the universality index of the string to which $\alpha$ is mapped becomes much larger than $k$. Afterwards, we use several gadgets to encode the logical properties of the $\sat$ problem by resulting into a universality index $k'$ with $k' < k$, if the logical properties are not met.

Before we start with these gadgets, let us introduce a renaming function for the variables $\rho : X \cup \bar{X} \rightarrow \mathcal{X}$ with $\rho(x_i) = \mathtt{z_i}$ and $\rho(\bar{x}_i) = \mathtt{u_i}$. Also, a substitution $h$ which maps $\alpha$ to a string of universality index $5n+m+2$ is called valid in the following. 

\textbf{The binarisation gadgets.} We use the following gadgets to make the image of variables~$\mathtt{z_i}$ and $\mathtt{u_i}$ under a valid substitution be strings over $\{0,1\}$. Recall that we have the alphabet $\Sigma \coloneqq \set{\mathtt{0,1,\hashtag,\dollar}}$ and the set of variables $\mathcal{X} \coloneqq \set{\mathtt{z_1,z_2,\ldots, z_n, u_1,u_2, \ldots, u_n}}$. 

At first, we construct the gadget $\mathtt{\pi_{\hashtag}}= \mathtt{(z_1z_2\cdots z_n u_1u_2 \cdots u_n 0 1 \dollar)^{N^6} \hashtag}$, as shown in Figure \ref{fig:pihash}. We observe that for all possible substitutions~$h$, we have two cases for the universality of the image of this gadget. On the one hand, assume that any of the variables is substituted under $h$ by a string that contains a $\hashtag$. Then, the universality index of the image of this gadget will be $\iota(h(\mathtt{\pi_{\hashtag}})) = k'$ with $k' \geq N^6 > k$, which is too big for a valid substitution. On the other hand, when all the variables are substituted under $h$ by strings that do not contain $\hashtag$, this gadget is mapped to a string which is exactly one arch because there is only one $\hashtag$ at its very end. Thus, under a valid substitution $h$, the images of the variables $\mathtt{z_i}$ and $\mathtt{u_i}$ do not contain $\hashtag$. Note also that, in the arch factorisation of such a string ($h(\mathtt{\pi_{\hashtag}})$, where $h$ is a valid substitution) we have one arch and no rest. 

The gadget~$\mathtt{\pi_{\dollar}} = \mathtt{(z_1z_2\cdots z_n u_1u_2 \cdots u_n 0 1 \hashtag)^{N^6}}\dollar $  is constructed analogously and can be seen in Figure~\ref{fig:pidollar}. This enforces that under a valid substitution $h$, the images of the variables $\mathtt{z_i}$ and $\mathtt{u_i}$ do not contain $\dollar$.

In conclusion, the gadgets $\mathtt{\pi_{\hashtag}}$ and $\mathtt{\pi_{\dollar}}$ enforce that under a valid substitution $h$, the image of the variables $\mathtt{z_i}$ and $\mathtt{u_i}$ contains only $0$ and $1$, i.e., they are binary strings.

\begin{figure}[!ht]
	%\begin{tikzpicture}[scale=0.6,every node/.style={scale=0.6}]
	\center
	\begin{tikzpicture}[]
      % Dialectics
        \node[circle] (w0) at (0,0) {$\mathtt{\pi_{\hashtag} = }$};
        \node[minimum height=8mm,align=center, right= 0mm of w0] (w1) {$\mathtt{(}$};
        \node[minimum height=8mm,align=center, right= 0mm of w1] (w2) {$\mathtt{z_1z_2\cdots z_n}$};
        \node[minimum height=8mm,align=center, right= 0mm of w2] (w3) {$\mathtt{u_1u_2\cdots u_n}$};
        \node[minimum height=8mm,align=center, right= 0mm of w3] (w4) {$\mathtt{01\dollar}$};
        \node[minimum height=8mm,align=center, right= 0mm of w4] (w5) {$\mathtt{)^{N^6}}$};
        \node[minimum height=8mm,align=center, right= 0mm of w5] (w6) {$\mathtt{\hashtag}$};
              
        % Calligraphic brace
        \draw [decorate, thick, decoration = {calligraphic brace, mirror, amplitude=10pt}] (w2.south west) --  (w3.south east) node  [midway, yshift=-0.5cm]  {No $\mathtt{\hashtag}$ allowed};
        
    \end{tikzpicture}
    \caption{If any of the variables is substituted by a string that contains a $\hashtag$, then this gadget would add at least $N^6$ arches, which is already greater than the target universality $k = 5n + m + 2$.}
    \label{fig:pihash}
\end{figure}

%\vspace{-0.3in}
\begin{figure}[!ht]
	%\begin{tikzpicture}[scale=0.6,every node/.style={scale=0.6}]
	\center
	\begin{tikzpicture}[]
      % Dialectics
        \node[circle] (w0) at (0,0) {$\mathtt{\pi_{\dollar} = }$};
        \node[minimum height=8mm,align=center, right= 0mm of w0] (w1) {$\mathtt{(}$};
        \node[minimum height=8mm,align=center, right= 0mm of w1] (w2) {$\mathtt{z_1z_2\cdots z_n}$};
        \node[minimum height=8mm,align=center, right= 0mm of w2] (w3) {$\mathtt{u_1u_2\cdots u_n}$};
        \node[minimum height=8mm,align=center, right= 0mm of w3] (w4) {$\mathtt{01\hashtag}$};
        \node[minimum height=8mm,align=center, right= 0mm of w4] (w5) {$\mathtt{)^{N^6}}$};
        \node[minimum height=8mm,align=center, right= 0mm of w5] (w6) {$\mathtt{\dollar}$};
              
        % Calligraphic brace
        \draw [decorate, thick, decoration = {calligraphic brace, mirror, amplitude=10pt}] (w2.south west) --  (w3.south east) node  [midway, yshift=-0.5cm]  {No $\mathtt{\dollar}$ allowed};
        
    \end{tikzpicture}
    \caption{If any of the variables is substituted by a string that contains a $\dollar$, then this gadget would add at least $N^6$ arches, which is already greater than the target universality $k = 5n + m + 2$.}
    \label{fig:pidollar}
\end{figure}

\textbf{The Boolean gadgets.} We use the following gadgets to force the image of each $\mathtt{z_i}$ and $\mathtt{u_i}$ to be either in $\mathtt{0}^*$ or $\mathtt{1}^*$. Intuitively, mapping a variable $\mathtt{z_i}$ (respectively, $\mathtt{u_i}$) to a string of the form $0^+$ corresponds to mapping $x_i$ (respectively, $\bar{x}_i$) to the Boolean value false (respectively, true). Similarly, mapping one of these string-variables to a string from $1^+$ means mapping the corresponding boolean variable to true. For a beginning, these gadgets just have to enforce that the image of any string-variable does not contain both $\mathtt{0}$ and $\mathtt{1}$. We construct the gadget $\mathtt{\pi_i^z}$ (respectively $\mathtt{\pi_i^u}$) for every string-variable $\mathtt{z_i}$ (respectively, $\mathtt{u_i}$), according to Figure~\ref{fig:pizoneandzero}. 
More precisely, for all $i\in[n]$, we define two gadgets $\mathtt{\pi_i^z}= (\mathtt{z_i \dollar \hashtag})^{N^6} \mathtt{1001\dollar\hashtag}$ and $\mathtt{\pi_i^u}= (\mathtt{u_i \dollar \hashtag})^{N^6} \mathtt{1001\dollar\hashtag}$.

We now analyse the possible images of $\mathtt{\pi_i^z}= (\mathtt{z_i \dollar \hashtag})^{N^6} \mathtt{1001\dollar\hashtag}$ under various substitutions $h$. There are three ways in which ${\mathtt{z_i}}$ can be mapped to a string by $h$. Firstly, if the image of ${\mathtt{z_i}}$ contains both $\mathtt{0}$ and $\mathtt{1}$, then for the universality index of the image of $\mathtt{\pi_i^z}$ under the respective substitution is $\iota(h(\mathtt{\pi_i^z})) \geq N^6 > k$; such a substitution cannot be valid. Secondly, if the image of ${\mathtt{z_i}}$ is a string from $\mathtt{0}^*$, then the universality of this gadget is exactly $\iota(h(\mathtt{\pi_i^z})) = 2$ as shown in Figure \ref{fig:pizone}. As a third option, if the image of ${\mathtt{z_i}}$ is a string from $\mathtt{1}^*$, then the universality of this gadget is exactly $\iota(h(\mathtt{\pi_i^z})) = 2$ as shown in Figure \ref{fig:pizzero}. As for the binarisation gadgets, in the arch factorisation of a string $h(\mathtt{\pi_i^z})$, where $h$ is a valid substitution, we have exactly two arches (and no rest). 
A similar analysis can be performed for the gadgets $\mathtt{\pi_i^u}= (\mathtt{u_i \dollar \hashtag})^{N^6} \mathtt{1001\dollar\hashtag}$.
In conclusion, the gadgets $\mathtt{\pi_i^z}$  and $\mathtt{\pi_i^u}$ enforce that under a valid substitution $h$, the image of the variables $\mathtt{z_i}$ and $\mathtt{u_i}$ contains either only $0$s or only $1$s (or is empty). 

%\vspace{-0.1in}
\begin{figure}[!ht]
	%\begin{tikzpicture}[scale=0.6,every node/.style={scale=0.6}]
	\center
	\begin{tikzpicture}[]
      % Dialectics
        \node[circle] (w0) at (0,0) {$\mathtt{\pi_i^z = }$};
        \node[minimum height=8mm,align=center, right= 0mm of w0] (w1) {$\mathtt{(}$};
        \node[minimum height=8mm,align=center, right= 0mm of w1] (w2) {$\mathtt{z_i}$};
        \node[minimum height=8mm,align=center, right= 0mm of w2] (w3) {$\mathtt{\dollar\hashtag }$};
        \node[minimum height=8mm,align=center, right= 0mm of w3] (w4) {$\mathtt{)^{N^6}}$};
        \node[minimum height=8mm,align=center, right= 0mm of w4] (w5) {$\mathtt{1}$};
        \node[minimum height=8mm,align=center, right= 0mm of w5] (w6) {$\mathtt{0}$};
        \node[minimum height=8mm,align=center, right= 0mm of w6] (w7) {$\mathtt{01\dollar\hashtag }$};
              
        % Calligraphic brace
        \draw [decorate, thick, decoration = {calligraphic brace, mirror, amplitude=10pt}] (w2.south west) --  (w3.south east) node  [midway, yshift=-0.5cm]  {No $\mathtt{0}$ and $\mathtt{1}$ together allowed};
   
    \end{tikzpicture}
    \caption{If any of the variables in the gadget $\mathtt{\pi_i^z}$ (analogous for $\mathtt{\pi_i^u}$) is substituted by a string that contains both a $0$ and a $1$, then this gadget would add $N^6$ arches, which is already greater than the target universality $k = 5n + m + 2$.}    
    \label{fig:pizoneandzero}
\end{figure}

%\vspace{-0.3in}

\begin{figure}[!ht]
	%\begin{tikzpicture}[scale=0.6,every node/.style={scale=0.6}]
	\center
	\begin{tikzpicture}[]
      % Dialectics
        \node[circle] (w0) at (0,0) {$\mathtt{\pi_i^z = }$};
        \node[minimum height=8mm,align=center, right= 0mm of w0] (w1) {$\mathtt{(}$};
        \node[minimum height=8mm,align=center, right= 0mm of w1] (w2) {$\mathtt{z_i}$};
        \node[minimum height=8mm,align=center, right= 0mm of w2] (w3) {$\mathtt{\dollar\hashtag }$};
        \node[minimum height=8mm,align=center, right= 0mm of w3] (w4) {$\mathtt{)^{N^6}}$};
        \node[minimum height=8mm,align=center, right= 0mm of w4] (w5) {$\mathtt{1}$};
        \node[minimum height=8mm,align=center, right= 0mm of w5] (w6) {$\mathtt{0}$};
        \node[minimum height=8mm,align=center, right= 0mm of w6] (w7) {$\mathtt{01\dollar\hashtag }$};
              
        % Calligraphic brace
        \draw [decorate, thick, decoration = {calligraphic brace, mirror, amplitude=10pt}] (w2.south west) --  (w5.south east) node  [midway, yshift=-0.5cm]  {arch if $\mathtt{z_i} \in \mathtt{0}^+$};
        \draw [decorate, thick, decoration = {calligraphic brace, mirror, amplitude=10pt}] (w6.south west) --  (w7.south east) node  [midway, yshift=-0.5cm] {};
        
    \end{tikzpicture}
    \caption{If any of the variables $\mathtt{\pi_i^z}$ (analogous for $\mathtt{\pi_i^u}$) consits of only $0$'s, this gadget would add $2$ arches per variable.}
    \label{fig:pizone}
\end{figure}

%\vspace{-0.3in}

\begin{figure}[!ht]
	%\begin{tikzpicture}[scale=0.6,every node/.style={scale=0.6}]
	\center
	\begin{tikzpicture}[]
      % Dialectics
        \node[circle] (w0) at (0,0) {$\mathtt{\pi_i^z = }$};
        \node[minimum height=8mm,align=center, right= 0mm of w0] (w1) {$\mathtt{(}$};
        \node[minimum height=8mm,align=center, right= 0mm of w1] (w2) {$\mathtt{z_i}$};
        \node[minimum height=8mm,align=center, right= 0mm of w2] (w3) {$\mathtt{\dollar\hashtag }$};
        \node[minimum height=8mm,align=center, right= 0mm of w3] (w4) {$\mathtt{)^{N^6}}$};
        \node[minimum height=8mm,align=center, right= 0mm of w4] (w5) {$\mathtt{1}$};
        \node[minimum height=8mm,align=center, right= 0mm of w5] (w6) {$\mathtt{0}$};
        \node[minimum height=8mm,align=center, right= 0mm of w6] (w7) {$\mathtt{01\dollar\hashtag }$};
              
        % Calligraphic brace
        \draw [decorate, thick, decoration = {calligraphic brace, mirror, amplitude=10pt}] (w2.south west) --  (w6.south east) node  [midway, yshift=-0.5cm]  {arch if $\mathtt{z_i} \in \mathtt{1}^*$};
        \draw [decorate, thick, decoration = {brace, mirror, amplitude=8pt}] (w7.south west) --  (w7.south east) node  [midway, yshift=-0.5cm] {};
        
    \end{tikzpicture}
    \caption{If any of the variables $\mathtt{\pi_i^z}$ (analogous for $\mathtt{\pi_i^u}$) consits of only $1$'s, this gadget would add $2$ arches per variable.} 
    \label{fig:pizzero}
\end{figure}

%Therefore, the gadgets so far make sure that the substitutions of all variables are restricted to strings that consist of either only $0$'s or only $1$'s. Further, we observe that an arch factorisation always ends on the rightmost character of each gadget.

\textbf{The complementation gadgets.} The role of these gadgets is to enforce the property that  $\mathtt{z_i}$ and $\mathtt{u_i}$ are not both in $\mathtt{0}^+$ or not both in $\mathtt{1}^+$, for all $i\in[n]$. We construct the gadget $\mathtt{\xi_i}=\mathtt{\dollar z_i u_i \hashtag}$, for every $i\in[n]$, according to Figure \ref{fig:pinegations}. Let us now analyse the image of these gadgets under a valid substitution ($\mathtt{\pi_{\hashtag}}$ and $\mathtt{\pi_{\dollar}}$ are mapped to exactly one arch each, and $\mathtt{\pi_i^z}$  and $\mathtt{\pi_i^u}$ are mapped to exactly two arches each). 
In this case, we observe that $\mathtt{\xi_i}$ is mapped to exactly one complete arch ending on the rightmost symbol $\hashtag$ if and only if the image of one of the variables $\mathtt{z_i}$ and $\mathtt{u_i}$ has at least one $0$ and the image of the other one has at least one $1$. Further, let us consider the concatenation of two consecutive such gadgets $\mathtt{\xi_i\xi_{i+1}}$ and assume that both $\mathtt{z_i}$ and $\mathtt{u_i}$ are mapped to strings over the same letter or at least one of them is mapped to the empty word. In that case, the first arch must close to the right of the $\dollar$ letter in ${\xi_{i+1}}$, hence ${\xi_i\xi_{i+1}}$ could not contain two arches. Thus, the concatenation of the gadgets ${\xi_1\cdots \xi_{n}}$ is mapped to a string which has exactly $n$ arches if and only if each gadget ${\xi_i}$ is mapped to exactly one arch, which holds if and only if the image of one of the variables $\mathtt{z_i}$ and $\mathtt{u_i}$ has at least one $0$ and the image of the other one has at least one $1$. When assembling together all the gadgets, we will ensure that, in a valid substitution, this property holds: $\mathtt{z_i}$ and $\mathtt{u_i}$ are mapped to repetitions of different letters.

%\vspace{-0.1in}

\begin{figure}[!ht]
	%\begin{tikzpicture}[scale=0.6,every node/.style={scale=0.6}]
	\center
	\begin{tikzpicture}[]
      % Dialectics
        \node[circle] (w0) at (0,0) {$\mathtt{\xi_i = }$};
        \node[minimum height=8mm,align=center, right= 0mm of w0] (w1) {$\mathtt{\dollar}$};
        \node[minimum height=8mm,align=center, right= 0mm of w1] (w2) {$\mathtt{z_i}$};
        \node[minimum height=8mm,align=center, right= 0mm of w2] (w3) {$\mathtt{u_i}$};
        \node[minimum height=8mm,align=center, right= 0mm of w3] (w4) {$\mathtt{\hashtag}$};
              
        % Calligraphic brace
        %\draw [decorate, thick, decoration = {calligraphic brace, mirror}] (w1.south east) --  (w6.south east) node  [midway, yshift=-0.5cm]  {Arch if $\mathtt{z_i}$ contains $\mathtt{1}$};
        
    \end{tikzpicture}
    \caption{In order for $\mathtt{\xi_i}$ to contribute an arch, one of $z_i$ and $u_i$ has to be replaced by only $1$'s while the other must consist of only $0$'s.} 
    \label{fig:pinegations}
\end{figure}

\textbf{The clause gadgets.} Let $c_j = (y_j^1 \lor y_j^2 \lor y_j^3)$ be a clause, with $y_j^1,y_j^2,y_j^3 \in X \cup \bar{X}$. We construct the gadget $\mathtt{\delta_j}$ for every clause $c_j$ as $\mathtt{\dollar 0 \rho(y_j^1) \rho(y_j^2) \rho(y_j^3) \hashtag}$, as shown in Figure \ref{fig:piclause}. Now, by all of the properties discussed for the previous gadgets, we can analyse the possible number of arches contained in the image of this gadget under a valid substitution. Firstly, note that if at least one of the variables $\mathtt{\rho(y_j^1),\rho(y_j^2),\rho(y_j^3)}$ is mapped to a string containing at least one $1$, then this gadget will contain exactly one arch ending on its rightmost symbol $\mathtt{\hashtag}$. Now consider the concatenation of two consecutive such gadgets ${\delta_j\delta_{j+1}}$, and assume that all the variables in ${\delta_j}$ are substituted by only $0$s. In this case, the first arch must end to the right of the $\dollar$ symbol in ${\delta_{j+1}}$, hence the string to which $\mathtt{\delta_j\delta_{j+1}}$ is mapped could not contain two arches. The same argument holds if we look at the concatenation of the last complementation gadget and the first clause gadget, e.g. $\xi_n\delta_1$. 

 Thus, the concatenation of the gadgets ${\delta_1\cdots \delta_{m}}$ is mapped to a string which has exactly $m$ arches if and only if each gadget ${\delta_i}$ is mapped to exactly one arch. This holds if and only if at least one of the string-variables occurring in $\delta_i$ is mapped to a string of $1$s. When assembling together all the gadgets, we will ensure that at least one of the variables occurring in each gadget ${\delta_i}$, for all $i\in [m]$, is mapped to a string of $1$s in a valid substitution.

%\vspace{-0.1in}

\begin{figure}[!ht]
	%\begin{tikzpicture}[scale=0.6,every node/.style={scale=0.6}]
	\center
	\begin{tikzpicture}[]
      % Dialectics
        \node[circle] (w0) at (0,0) {$\mathtt{\delta_j = }$};
        \node[minimum height=8mm,align=center, right= 0mm of w0] (w1) {$\mathtt{\dollar}$};
        \node[minimum height=8mm,align=center, right= 0mm of w1] (w2) {$\mathtt{0}$};
        \node[minimum height=8mm,align=center, right= 0mm of w2] (w3) {$\mathtt{\rho(y_j^1)}$};
        \node[minimum height=8mm,align=center, right= 0mm of w3] (w4) {$\mathtt{\rho(y_j^2)}$};
        \node[minimum height=8mm,align=center, right= 0mm of w4] (w5) {$\mathtt{\rho(y_j^3)}$};
        \node[minimum height=8mm,align=center, right= 0mm of w5] (w6) {$\mathtt{\hashtag}$};
              
        % Calligraphic brace
        %\draw [decorate, thick, decoration = {calligraphic brace, mirror}] (w1.south east) --  (w6.south east) node  [midway, yshift=-0.5cm]  {Arch if $\mathtt{z_i}$ contains $\mathtt{1}$};
        
    \end{tikzpicture}
    \caption{In order for $\mathtt{\delta_j}$ to contribute an arch, at least one of $\mathtt{\rho(y_j^1)}$, $\mathtt{\rho(y_j^2)}$ and $\mathtt{\rho(y_j^3)}$ has to be replaced by only $1$'s.}      
    \label{fig:piclause}    
\end{figure}

\textbf{Final Assemblage.} We finish the construction of the pattern $\alpha$ by concatenating all the gadgets. That is, $\alpha=\mathtt{\pi_{\hashtag}}\mathtt{\pi_{\dollar}}\mathtt{\pi_1^z\pi_1^u\pi_2^z\pi_2^u \cdots \pi_n^z\pi_n^u}\mathtt{\xi_1\xi_2 \cdots \xi_n}\mathtt{\delta_1\delta_2 \cdots \delta_m},$ as shown in Figure \ref{fig:pifinal}. 

%\vspace{-0.1in}
\begin{figure}[!ht]
	%\begin{tikzpicture}[scale=0.6,every node/.style={scale=0.6}]
	\center
	\begin{tikzpicture}[]
      % Dialectics
        \node[circle] (w0) at (0,0) {$\alpha = $};
        \node[minimum height=8mm,align=center, right= 0mm of w0] (w1) {$\mathtt{\pi_{\hashtag}}$};
        \node[minimum height=8mm,align=center, right= 0mm of w1] (w2) {$\mathtt{\pi_{\dollar}}$};
        \node[minimum height=8mm,align=center, right= 0mm of w2] (w3) {$\mathtt{\pi_1^z\pi_1^u\pi_2^z\pi_2^u \cdots \pi_n^z\pi_n^u}$};
        \node[minimum height=8mm,align=center, right= 0mm of w3] (w4) {$\mathtt{\xi_1\xi_2 \cdots \xi_n}$};
        \node[minimum height=8mm,align=center, right= 0mm of w4] (w5) {$\mathtt{\delta_1\delta_2 \cdots \delta_m}$};              
        % Calligraphic brace
        \draw [decorate, thick, decoration = {brace, mirror, amplitude=8pt}] (w1.south west) --  (w1.south east) node  [midway, yshift=-0.5cm]  {$\mathtt{1}$};
        \draw [decorate, thick, decoration = {brace, mirror, amplitude=8pt}] (w2.south west) --  (w2.south east) node  [midway, yshift=-0.5cm]  {$\mathtt{1}$};show equivalence directions
        \draw [decorate, thick, decoration = {calligraphic brace, mirror, amplitude=10pt}] (w3.south west) --  (w3.south east) node  [midway, yshift=-0.5cm]  {$\mathtt{4n}$};
        \draw [decorate, thick, decoration = {calligraphic brace, mirror, amplitude=10pt}] (w4.south west) --  (w4.south east) node  [midway, yshift=-0.5cm]  {$\mathtt{n}$};                
        \draw [decorate, thick, decoration = {calligraphic brace, mirror, amplitude=10pt}] (w5.south west) --  (w5.south east) node  [midway, yshift=-0.5cm]  {$\mathtt{m}$};    
                
    \end{tikzpicture}
    \caption{The concatenation of all gadgets and their respective amount of arches we expect, if we can find a substitution $h$ with $\iota(h(\alpha)) = 5n + m + 2$.}
    \label{fig:pifinal}
\end{figure}

\textbf{The correctness of the reduction.}
We show that there exists a substitution $h$ of the string variables of $\alpha$ with $\iota(h(\alpha)) = 5n + m + 2$ (i.e., a valid substitution) if and only if we can find an assignment for all Boolean-variables occurring in $\varphi$ that satisfy all clauses $c_j \in \varphi$. 

Let us first show that if there is a satisfying assignment for Boolean-variables of $\varphi$ which makes the formula true, then there exists a substitution $h$ of the string-variables of $\alpha$ such that $\iota(h(\alpha)) = 5n + m + 2$. In this case, we can give a canonical substitution $h$ with $h(\rho(x_i)) = 1$ and $h(\rho(\bar{x}_i)) = 0$ if $x_i$ is assigned true, and $h(\rho(x_i)) = 0$ and $h(\rho(\bar{x}_i)) = 1$  if $x_i$ is assigned false. We can easily verify, by the definition of the gadgets, that under this substitution we have $\iota(h(\alpha)) = 5n + m + 2$. Indeed, in the images of each gadget $\mathtt{\pi_{\hashtag}}, \mathtt{\pi_{\dollar}}, \mathtt{\pi^z_i},\xi_i $ and $\delta_i$ we have exactly one arch, ending on the last symbol of the respective strings, while in the image of each gadget $\mathtt{\pi^z_i}$ under this substitution there will be exactly two arches, again ending on their last positions.

Conversely, we want to show that if we have a substitution $h$ of the string-variables such that $\iota(h(\alpha)) = 5n + m + 2$, then there must be a satisfying assignment of the Boolean-variables for $\varphi$. The general idea is the following. We assume that we have a substitution of the string variables and compute the arch factorisation greedily and look at the properties enforced by the individual gadgets, as discussed above. Assume first, towards a contradiction, that the image of some variable~$\mathtt{z_i}$ contains both $0$ and $1$ or that it contains $\hashtag$ or $\dollar$.
Then, as explained, the number of arches of the image of
$\mathtt{\pi_{\hashtag}\pi_{\dollar}\pi_1^z\pi_1^u\pi_2^z\pi_2^u \cdots \pi_n^z\pi_n^u}$
will blow up to a value greater than $5n + m + 2$, a contradiction. The same reasoning holds for the variables $\mathtt{u_i}$. Therefore, each variable $\mathtt{z_i}$ is mapped to a string from $\mathtt{0}^*\cup \mathtt{1}^*$, and the same holds for the variables $\mathtt{u_i}$.
It follows that $h(\mathtt{\pi_{\hashtag}\pi_{\dollar}\pi_1^z\pi_1^u\pi_2^z\pi_2^u \cdots \pi_n^z\pi_n^u})$ contributes exactly $4n + 2$ arches to the arch factorisation of $h(\alpha)$, and the last arch of this factorisation (when identified greedily, from left to right) ends on the last letter of $\mathtt{\pi_n^u}$ (which is a $\hashtag$ symbol). By this last property, we are guaranteed that we can look at the suffix $h(\mathtt{\xi_1\xi_2 \cdots \xi_n\delta_1\delta_2 \cdots \delta m})$ of our pattern's image under $h$ separately, as no arch from the prefix $h(\mathtt{\pi_{\hashtag}\pi_{\dollar}\pi_1^z\pi_1^u\pi_2^z\pi_2^u \cdots \pi_n^z\pi_n^u})$ extends in it. More precisely, this allows us to consider the subproblem of analysing $h$ under the assumption that $\iota(h(\mathtt{\xi_1\xi_2 \cdots \xi_n\delta_1\delta_2 \cdots \delta m})) = m + n$, and, moreover, each string variable is mapped to strings from $\mathtt{0}^*\cup \mathtt{1}^*$. In this subproblem, we have $n+m$ $\dollar$ symbols in the pattern and we can not introduce new $\dollar$ symbols in the image of the string-variables. Therefore, every $\dollar$ symbol needs to be in exactly one arch. Now, as discussed when introducing the complementation and clause gadgets, we have to have the following properties, as otherwise we would have at least two $\dollar$ symbols in the same arch and would only get to $k'<m+n$ arches overall. Firstly, one of each $h(\rho(x_i))$ and $h(\rho(\bar{x}_i))$ has to consist only of $\mathtt{0}$s while the other consists only of $\mathtt{1}$s, and both of them should have length at least $1$. Secondly, at least one of each $\mathtt{\rho(y_j^1)}$, $\mathtt{\rho(y_j^2)}$ and $\mathtt{\rho(y_j^3)}$ has to be substituted by a string from $\mathtt{1}^+$. 

Given these properties, we can construct a satisfying assignment of the Boolean-variables from $\varphi$ by setting a variable to be true if and only if their corresponding string-variable is mapped to a string from $\mathtt{1}^*$. As $h(\rho(x_i))$ and $h(\rho(\bar{x}_i))$ are mapped to strings over distinct alphabets, we get that $x_i$ and $\bar{x}_i$ will be assigned distinct truth values. Moreover, at least one of each $\mathtt{\rho(y_j^1)}$, $\mathtt{\rho(y_j^2)}$ and $\mathtt{\rho(y_j^3)}$ has to be substituted by a string from $\mathtt{1}^+$, so at least one variable per clause is assigned to true. Therefore, this assignment makes $\varphi$ true.

This concludes our proof, and shows that $\univmatch (\alpha,k)$ is \NP-hard.  \qed\end{proof}

In the following we show that $\univmatch (\alpha,k)$ is in \NP.
One natural approach is to guess the images of the variables occurring in the input pattern $\alpha$ under a substitution~$h$
and check whether or not $\universalityIdx{}{h(\alpha)}$ is indeed $k$.
However, it is difficult to bound the size of the images of the variables of $\alpha$ under $h$ in terms of the size of $\alpha$ and $\log k$ (the size of our input), since the strings we look for may be exponentially long. For example, consider the pattern~$\alpha=X_1$: the length of the shortest $k$-universal string is $k\sigma$~\cite{BarkerFHMN20}, which is already exponential in $\log k$.
Therefore, we consider guessing only the subsequence universality signatures for the image of each variable under the substitution. We show that it is sufficient to guess $|\var(\alpha)|$ subsequence universality signatures, one for each variable, instead of the actual images of the variables under a substitution~$h$ using the following proposition by Schnoebelen and Veron~\cite{SchnoebelenV23}.
%, as the universality index of the concatenation of multiple strings can be computed once we know their respective universality signatures.

\begin{proposition}[\!\!\cite{SchnoebelenV23}]\label{prop:signatureIsSufficient}
    For $u,v\in\Sigma^{\ast}$,
    we can compute $\subseqUnivSign{uv}$,
    given the subsequence universality
    signatures~$\subseqUnivSign{u}
    =(\gamma_u,\myKarr{}_u,\myRarr{}_u)$
    and $\subseqUnivSign{v}
    =(\gamma_v,\myKarr{}_v,\myRarr{}_v)$
    of each string, 
    in time polynomial in $|\alphabetOf{uv}|$ and $\log t$,
    where $t$ is the maximum element
    of $\myKarr{}_u$ and $\myKarr{}_v$.
\end{proposition}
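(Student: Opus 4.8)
The plan is to exhibit an explicit procedure that, from $\subseqUnivSign{u}=(\gamma_u,\myKarr{}_u,\myRarr{}_u)$ and $\subseqUnivSign{v}=(\gamma_v,\myKarr{}_v,\myRarr{}_v)$, builds $\subseqUnivSign{uv}=(\gamma,\myKarr{},\myRarr{})$ using only $O(\sigma)$ alphabet operations and a constant number of arithmetic operations on integers bounded by $2t+1$, hence representable on $O(\log t)$ bits. First I would compute the permutation $\gamma$: a letter occurs in $uv$ iff it occurs in $u$ or in $v$, and a letter $a\notin\alphabetOf{u}$ has its first occurrence in $uv$ inside $v$, at position $|u|+\Xrank{v}{0}{a}$; hence $\gamma$ is $\gamma_u$ followed by the letters of $\alphabetOf{v}\setminus\alphabetOf{u}$ listed in the order in which they appear in $\gamma_v$, which costs a single scan of $\gamma_v$. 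Next I would fill in the easy entries: for $a\in\alphabetOf{v}\setminus\alphabetOf{u}$ the suffix of $uv$ after the first occurrence of $a$ equals the suffix of $v$ after the first occurrence of $a$, so $\universalityIdx{a}{uv}=\universalityIdx{a}{v}$ and $\alphabetOf{\rest{a}{uv}}=\alphabetOf{\rest{a}{v}}$, i.e.\ I copy the matching entries of $\myKarr{}_v$ and $\myRarr{}_v$; and for letters outside $\alphabetOf{uv}$ I set the entries to $-\infty$ and $\Sigma$, as the definition prescribes.

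The core of the argument is the case $a\in\alphabetOf{u}$, and for it I would isolate a short decomposition lemma for the arch factorisation. First, if $\universalityIdx{}{y}=\ell$ for a word $y$ (so $\alphabetOf{y}=\Sigma$ whenever $\ell\ge 1$), then for every $j\le\ell$ the index $i_j=\max_{c\in\Sigma}\Xrank{yv}{i_{j-1}}{c}$ is already attained inside $y$, so the first $\ell$ arches of $yv$ coincide with the arches of $y$; hence $\universalityIdx{}{yv}=\ell+\universalityIdx{}{\rest{}{y}\,v}$ and $\rest{}{yv}=\rest{}{\rest{}{y}\,v}$. Second, if $\alphabetOf{x}=S\subsetneq\Sigma$, then $\universalityIdx{}{xv}$ and $\alphabetOf{\rest{}{xv}}$ depend on $x$ only through $S$: in $i_1=\max_{c\in\Sigma}\Xrank{xv}{0}{c}$ the maximum over $c\in S$ is attained at a position $\le|x|$, while for $c\in\Sigma\setminus S$ it equals $|x|+\Xrank{v}{0}{c}$ (or $\infty$ if $c\notin\alphabetOf{v}$); consequently, if $\Sigma\setminus S\not\subseteq\alphabetOf{v}$ then $xv$ has no arch and $\alphabetOf{\rest{}{xv}}=S\cup\alphabetOf{v}$, whereas otherwise the first arch of $xv$ ends exactly at the first occurrence in $v$ of the letter $b^\ast\in\Sigma\setminus S$ whose first occurrence in $v$ is latest (equivalently, the last letter of $\gamma_v$ lying in $\Sigma\setminus S$), which yields $\universalityIdx{}{xv}=1+\universalityIdx{b^\ast}{v}$ and $\rest{}{xv}=\rest{b^\ast}{v}$.

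Combining the two parts with $y=\interval{u}{\Xrank{u}{0}{a}+1}{|u|}$, $\ell=\myKarr{}_u[i]$ and $S=\myRarr{}_u[i]$ (where $i$ is the index with $\gamma_u[i]=a$), I obtain $\universalityIdx{a}{uv}$ and $\alphabetOf{\rest{a}{uv}}$ after a subset test $\Sigma\setminus S\subseteq\alphabetOf{v}$, one scan of $\gamma_v$ to locate the index $i'$ with $\gamma_v[i']=b^\ast$, two table lookups, and one addition: the universality index is $\myKarr{}_u[i]$ if $\Sigma\setminus S\not\subseteq\alphabetOf{v}$ and $\myKarr{}_u[i]+1+\myKarr{}_v[i']$ otherwise, while the rest-alphabet is $S\cup\alphabetOf{v}$ in the first case and $\myRarr{}_v[i']$ in the second. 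Since there are at most $\sigma=|\Sigma|$ letters to handle, each costing $O(\sigma)$ alphabet operations and $O(1)$ arithmetic operations on numbers bounded by $\myKarr{}_u[i]+\myKarr{}_v[i']+1\le 2t+1$, the whole computation is polynomial in $|\alphabetOf{uv}|$ and $\log t$. The main obstacle is to establish the decomposition lemma cleanly, in particular the two facts that a rest interacts with whatever follows it only through its alphabet, and that completing a rest to a full arch lands precisely on the signature letter $b^\ast$ of $v$; once these are in place, the remaining correctness verification is routine case analysis.
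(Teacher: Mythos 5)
Your argument is correct, and it is worth noting that the paper itself gives no proof of this statement: Proposition~\ref{prop:signatureIsSufficient} is imported verbatim from Schnoebelen and Veron~\cite{SchnoebelenV23}, so there is no in-paper proof to compare against. Your derivation is exactly the standard one underlying that result: (i) $\gamma$ for $uv$ is $\gamma_u$ followed by the new letters of $v$ in $\gamma_v$-order; (ii) for letters first occurring in $v$ the entries are copied from $\subseqUnivSign{v}$; (iii) for $a\in\alphabetOf{u}$ the suffix $yv$ factorises as the $\myKarr{}_u[i]$ arches of $y$, then possibly one ``bridging'' arch determined solely by $S=\myRarr{}_u[i]$ and ending on the last letter of $\gamma_v$ lying in $\Sigma\setminus S$, after which everything is read off from $\myKarr{}_v$ and $\myRarr{}_v$. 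Your two decomposition facts (a rest influences what follows only through its alphabet; the bridging arch closes on the signature letter $b^\ast$ of $v$) are both correctly justified, the degenerate cases ($\universalityIdx{}{y}=0$, $\Sigma\setminus S\not\subseteq\alphabetOf{v}$, letters absent from $uv$) are handled, and the complexity accounting ($O(\sigma)$ set operations per letter plus $O(1)$ additions on integers bounded by $2t+1$) matches the claimed bound. I see no gap.
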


\begin{comment}
\todo[inline]{remove proposition, show we can compute full signature using concatenating from Schnoebelen and say how we use it}

\begin{proposition}\label{prop:signatureIsSufficient}
    For strings~$u$, $w$, and $v$,
%    that satisfy $\alphabetOf{w}=\Sigma$, // no need now
    we can compute $\universalityIdx{}{uwv}$,
 given $u$, $v$ and the subsequence universality signature~$\subseqUnivSign{w}$
    of string~$w$.
\end{proposition}
\begin{proof}
If $\alphabetOf{w}\ne\Sigma$,
we compute the next arch
that spans across $u$ to $v$
with $\alphabetOf{\rest{}{u}\subseqUnivSign{w}}$.
Otherwise, let $\subseqUnivSign{w}=(\gamma,\myKarr{},\myRarr{})$
where $\gamma$ is a permutation of $\Sigma$,
$\myKarr{}$ is an array of $\sigma$ integers,
and $\myRarr{}$ is an array of $\sigma$ subsets of $\Sigma$.
We can compute the minimum integer~$i$
such that $\alphabetOf{\gamma[1:i]}\cup\alphabetOf{\rest{}{u}}=\Sigma$.
Such an $i$ must exist
because $\alphabetOf{\gamma}=\Sigma$.
Then, the character~$\gamma[i]$ must end the arch
that spans across $u$ and $w$.
Note that we have $\universalityIdx{}{u}+\myKarr{}[i]
=\universalityIdx{}{uw}$.
Finally, we can compute the rest of the arches
using $\myRarr{}[i]$ and $v$.
\qed
\end{proof}
\end{comment}

Once we have guessed
the subsequence universality signatures
of all variables in $\var(\alpha)$
under substitution~$h$,
we can compute
$\universalityIdx{}{h(\alpha)}$
in the following way.
We first compute the subsequence universality signature
of the maximal prefix of $\alpha$
that does not contain any variables.
We then incrementally compute
the subsequence universality signature
of prefixes of the image of $\alpha$.
Let $\alpha=\alpha_1\alpha_2$,
where we already have
$\subseqUnivSign{h(\alpha_1})$
from induction.
If $\alpha_2[1]$ is a variable,
we compute $\subseqUnivSign{h(\alpha_1\alpha_2[1])}$
from $\subseqUnivSign{h(\alpha_1)}$
and the guessed subsequence universality signature
for variable~$\alpha_2[1]$,
using Proposition~\ref{prop:signatureIsSufficient}.
Otherwise, we take the maximal prefix~$w$ of $\alpha_2$
that does not consist of any variables.
We first compute $\subseqUnivSign{w}$
and then compute $\subseqUnivSign{h(\alpha_1w)}$
using Proposition~\ref{prop:signatureIsSufficient}.
Once we have $\subseqUnivSign{h(\alpha)}
=(\gamma,\myKarr{},\myRarr{})$,
we compute $\universalityIdx{}{h(\alpha)}=\myKarr{}[\sigma]+1$.
Note that the whole process can be done
in a polynomial number of steps 
in $|\alpha|$, $\log k$, and $\sigma$
due to Proposition~\ref{prop:signatureIsSufficient}, provided that the signatures are of polynomial size.

Thus, we now measure the encoding size of a subsequence universality signature and, as such, the overall size of the certificate for $\univmatch$ that we guess.
We can use $\sigma!$ bits to encode a permutation~$\gamma$ of 
a subset of $\Sigma$.
An integer between 1 and $\sigma-1$ requires $\log\sigma$ bits.
Naively, $\myRarr{}$ requires $(2^\sigma)^\sigma$ bits
because there can be $2^\sigma$ choices for each item.
Finally, in the framework of our problem,
\begin{comment}
 we need $\log\sigma+\log k$ bits
to encode $\myKarr{}$
by the following proposition
by Schnoebelen and Veron.

\begin{proposition}[\cite{SchnoebelenV23}]\label{prop:indexDiffBound}
    $\max\{\myKarr{}[i]\mid 1\le i\le \sigma\}
    -\min\{\myKarr{}[i]\mid 1\le i\le \sigma\}
    \le 1$.
\end{proposition}

\todo[inline]{Here one needs to say that $\myKarr{}$ is decreasing and explain how exactly we use those bits: $\log \sigma$ the number of items equal to the max and $\log k$ for the actual value of that max (which should be upper bounded by the input number $k$)}
\end{comment}
note that $\myKarr{}[1]-\myKarr{}[|\gamma|]\le 1$
by Schnoebelen and Veron~\cite{SchnoebelenV23},
and that the values of $\myKarr{}[i]$ are non-increasing in $i$.
Therefore, we can encode $\myKarr{}$
as a tuple~$(l,k')$
where $k'=\max\{\myKarr{}[i]\mid 1\le i\le |\gamma|\}\le k$
and $l=|\{i\in[|\gamma|]\mid \myKarr{}[i]=k'\}|$.
This encoding scheme requires at most $\log\sigma+\log k$ bits.
Summing up, the overall space required to encode a certificate that consists of $|\var(\alpha)|$ subsequence universality signatures
takes at most $(1+\sigma!+(2^\sigma)^\sigma+\log\sigma+\log k)|\var(\alpha)|$ bits.
This is polynomial in the size of the input
and the number of variables, because we assume a constant-sized alphabet, i.e. $\sigma\in O(1)$.

It remains to design a deterministic polynomial algorithm
that tests the validity
of the guessed subsequence universality signature.
Assume that we have guessed the 3-tuple~$(\gamma,\myKarr{},\myRarr{})$.
We claim that there are only constantly many
strings we need to check
to decide whether or not
$(\gamma,\myKarr{},\myRarr{})$
is a valid subsequence universality signature - 
allowing us a brute-force approach.
Lemma \ref{lem:marginalPumpDownAndUp} allows us
to ``pump down''
strings with universality index
greater than $(2^\sigma)^\sigma$,
which is a constant. \looseness=-1

\begin{restatable}[]{lemma}{refmarginalPumpDownAndUp}\label{lem:marginalPumpDownAndUp}
    The tuple~$(\gamma,\myKarr{}_1,\myRarr{})$
    is a valid subsequence universality signature
    iff there exists $w\in\Sigma^{\ast}$
    with $\universalityIdx{}{w}\le (2^\sigma)^\sigma$,
    $\subseqUnivSign{w}=(\gamma,\myKarr{}_2,\myRarr{})$, and
    $\myKarr{}_1[t]-\myKarr{}_2[t]=c\in\N_0$ for all $t\in[|\gamma|]$.
\end{restatable}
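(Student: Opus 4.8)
The plan is to prove the two implications by \emph{pumping arches} in a word that witnesses the signature: for ``$\Leftarrow$'' I would pump arches \emph{up}, and for ``$\Rightarrow$'' I would pump them \emph{down}. In both directions we have a word $w$ whose signature has the stated permutation $\gamma$ and rest array $\myRarr{}$, and we may assume $\gamma$ is a permutation of all of $\Sigma$; otherwise $\universalityIdx{}{w}=0\le(2^\sigma)^\sigma$, all finite entries of the $\myKarr{}$-array vanish, only $c=0$ can arise, and the statement is trivial. The key object is the family of $\sigma$ \emph{channels}: channel $j$ is the arch factorisation of the suffix of $w$ starting right after the first occurrence of $\gamma[j]$ (with rest $\rest{\gamma[j]}{w}$ and universality index $\universalityIdx{\gamma[j]}{w}$). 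For a position $p$ at which channel $j$ has already started, let $S_j(p)\subsetneq\Sigma$ be the set of letters channel $j$ has read since it last closed an arch --- an arch closes exactly when this set reaches $\Sigma$ --- so the $\sigma$-tuple $(S_1(p),\ldots,S_\sigma(p))$ takes at most $(2^\sigma)^\sigma$ values, which is the source of the bound in the statement.

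For ``$\Rightarrow$'', let $w$ witness $(\gamma,\myKarr{}_1,\myRarr{})$; if $\universalityIdx{}{w}>(2^\sigma)^\sigma$ I would perform one pump-down step and iterate. Using $\myKarr{}_1[1]-\myKarr{}_1[\sigma]\le1$ \cite{SchnoebelenV23} and the fact that prepending one letter changes the universality index by at most $1$, one gets $\universalityIdx{\gamma[\sigma]}{w}\ge\universalityIdx{}{w}-2$. Look at the positions $\marginalSeq{\ell\sigma}{w}$ for $\ell=1,\ldots,\universalityIdx{\gamma[\sigma]}{w}$ --- the first occurrence of $\gamma[\sigma]$ followed by the ends of the first $\universalityIdx{\gamma[\sigma]}{w}-1$ arches of channel $\sigma$; at all of them $S_\sigma=\emptyset$, so the state tuple takes at most $(2^\sigma)^{\sigma-1}$ values, and since there are strictly more positions than that, two of them, at $\ell_1<\ell_2$, carry the same tuple. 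Deleting the factor $x:=w[\marginalSeq{\ell_1\sigma}{w}+1:\marginalSeq{\ell_2\sigma}{w}]$ yields $w':=w[1:\marginalSeq{\ell_1\sigma}{w}]\,w[\marginalSeq{\ell_2\sigma}{w}+1:|w|]$, and I claim $\subseqUnivSign{w'}=(\gamma,\myKarr{}',\myRarr{})$ with $\myKarr{}'[t]=\myKarr{}_1[t]-c$ for all $t$, where $c=\ell_2-\ell_1\ge1$. Indeed, all first occurrences lie before $\marginalSeq{\sigma}{w}\le\marginalSeq{\ell_1\sigma}{w}$, so $\gamma$ and $\alphabetOf{w'}=\Sigma$ are unchanged; each channel $j$ enters $x$ in some state $S_j$ and leaves $x$ in the same state (the two chosen positions carry equal tuples), while its processing of the part of $w$ before $\marginalSeq{\ell_1\sigma}{w}$ and after $\marginalSeq{\ell_2\sigma}{w}$ is literally unchanged, so channel $j$ merely loses the arches it completed inside $x$; and because $\ell_2\le\universalityIdx{\gamma[\sigma]}{w}$ channel $\sigma$ still completes at least one arch after $\marginalSeq{\ell_2\sigma}{w}$, hence (since $\universalityIdx{}{yv}\ge\universalityIdx{}{v}$ as every subsequence of $v$ is one of $yv$) so does every channel, so each rest $\rest{\gamma[j]}{w}$ lies in the untouched suffix and $\myRarr{}$ is preserved. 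Iterating strictly decreases $\universalityIdx{}{w}$, so after finitely many steps we obtain a word whose signature is $(\gamma,\myKarr{}_2,\myRarr{})$ with $\universalityIdx{}{w}\le(2^\sigma)^\sigma$ and $\myKarr{}_1[t]-\myKarr{}_2[t]=c$ for all $t$.

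The step I expect to be the main obstacle is proving that this deletion drops \emph{every} entry of the $\myKarr{}$-array by the \emph{same} constant $c$. The deleted factor $x$ loops every channel state $S_j$ back to itself. For $S\subsetneq\Sigma$ put $f(S,x)=\universalityIdx{}{y_Sx}$, where $y_S$ is any word with $\alphabetOf{y_S}=S$ and $|y_S|\le\sigma$; then the number of arches channel $j$ completes inside $x$ equals $f(S_j,x)$. Since $x$ loops $S_j$ to itself, $f(S_j,x^n)=n\cdot f(S_j,x)$ for every $n$, while $|\universalityIdx{}{y_{S_j}x^n}-\universalityIdx{}{x^n}|\le|y_{S_j}|\le\sigma$; dividing by $n$ and letting $n\to\infty$ forces $f(S_j,x)=\lim_{n}\universalityIdx{}{x^n}/n$, which does not depend on $j$. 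For channel $\sigma$ this common value is $f(\emptyset,x)=\universalityIdx{}{x}=\ell_2-\ell_1=c\ge1$, the uniform decrease; over all iterations the shifts add up to a single $c\in\N_0$.

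For ``$\Leftarrow$'', given a witness $w$ with $\universalityIdx{}{w}\le(2^\sigma)^\sigma$ and the target $\myKarr{}_1[t]=\myKarr{}_2[t]+c$, the case $c=0$ is immediate ($w$ itself works), so let $c\ge1$. Put $m=\Xrank{w}{0}{\gamma[\sigma]}$. An arch must contain the letter $\gamma[\sigma]$, whose first occurrence in $w$ is at $m$, so no channel can close an arch before position $m$; hence each state $S_j:=S_j(m)$ equals either $\emptyset$ or $\alphabetOf{w[\Xrank{w}{0}{\gamma[j]}+1:m]}$, and since $\Xrank{w}{0}{\gamma[1]}<\ldots<\Xrank{w}{0}{\gamma[\sigma]}$ the sets of the latter form are nested. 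Therefore $\set{S_1,\ldots,S_\sigma}$ is a chain, so there is a permutation $\tau$ of $\Sigma$ all of whose suffix-alphabets $\alphabetOf{\tau[i:\sigma]}$ contain every $S_j$ (extend the chain to a maximal one and read $\tau$ off it). Set $w':=w[1:m]\,\tau^{c}\,w[m+1:|w|]$. Reading $\tau$ starting in one of its suffix-alphabets closes exactly one arch and returns to that same state, so reading $\tau^{c}$ adds exactly $c$ arches to every channel and returns every channel to its state $S_j$; consequently $\gamma$ and $\alphabetOf{w'}=\Sigma$ are unchanged, each channel's processing of $w[m+1:|w|]$ is identical to the original, $\universalityIdx{\gamma[j]}{w'}=\myKarr{}_2[j]+c=\myKarr{}_1[j]$, and $\alphabetOf{\rest{\gamma[j]}{w'}}=S_j\cup\alphabetOf{w[m+1:|w|]}=\alphabetOf{\rest{\gamma[j]}{w}}$, so $\myRarr{}$ is preserved. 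Hence $\subseqUnivSign{w'}=(\gamma,\myKarr{}_1,\myRarr{})$, as required.
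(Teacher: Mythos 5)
Your proof is correct and follows the same overall two\nobreakdash-direction pumping strategy as the paper's, but the technical execution differs in both directions in ways worth noting. For the pump\nobreakdash-down, the paper pigeonholes on the $\sigma$-tuple of \emph{alphabets of the blocks} $u[\marginalSeq{i+l}{u}+1:\marginalSeq{i+l+1}{u}]$, $l=0,\ldots,\sigma-1$, taken at indices $i$ that are multiples of $\sigma$ in the marginal sequence, whereas you pigeonhole on the tuple of \emph{channel states} $(S_1(p),\ldots,S_\sigma(p))$ at the arch-ends of channel $\sigma$; both yield the $(2^\sigma)^\sigma$ bound and an excisable factor between two repeated states. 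Your version has a genuine advantage here: the paper merely asserts that the excision lowers every entry of $\myKarr{}$ by the same amount $\frac{j-i}{\sigma}$, while your limiting argument (combining $f(S_j,x^n)=n\,f(S_j,x)$ with $|\universalityIdx{}{y_{S_j}x^n}-\universalityIdx{}{x^n}|\le\sigma$) actually \emph{proves} this uniformity, filling a step the paper leaves implicit. For the pump\nobreakdash-up, the paper simply prepends $\gamma^c$ to the small witness, which is considerably simpler than your insertion of $\tau^c$ after the first occurrence of $\gamma[\sigma]$; your construction works, but it forces you to establish the chain structure of the states $S_j(m)$ and to build $\tau$ from a maximal chain, none of which is needed if one prepends. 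Two small slips that do not damage the argument: the condition on $\tau$ should read ``every $S_j$ \emph{is} one of the suffix-alphabets $\alphabetOf{\tau[i:\sigma]}$'' rather than ``every suffix-alphabet contains every $S_j$'' (the singleton suffix cannot contain a larger $S_j$; your parenthetical construction via a maximal chain makes the intended meaning clear), and the identity $\alphabetOf{\rest{\gamma[j]}{w'}}=S_j\cup\alphabetOf{w[m+1:|w|]}$ only holds when channel $j$ completes no arch inside $w[m+1:|w|]$ --- in general, preservation of $\myRarr{}$ follows from your preceding observation that each channel re-enters $w[m+1:|w|]$ in its original state, so its rest is unchanged.
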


\begin{proof}
    \textbf{Only if part.}
    Let $\subseqUnivSign{w}=(\gamma,\myKarr{}_2,\myRarr{})$
    with $\universalityIdx{}{w}\ge 1$.
    Then, we have $\subseqUnivSign{\gamma^cw}=(\gamma,\myKarr{}_1,\myRarr{})$
    where $\myKarr{}_1[t]=\myKarr{}_2[t]+c$ for all $t\in[\sigma]$
    and all $c\in\N_0$.
    
    \textbf{If part.}
    Let $u$ be a string
    with $\subseqUnivSign{u}=(\gamma,\myKarr{}_1,\myRarr{})$.
    If $\universalityIdx{}{u}\le (2^\sigma)^\sigma$,
    the statement is already true.
    Otherwise, we have $\universalityIdx{}{u}>(2^\sigma)^\sigma$.
    Consider two integers~$i$ and $j$ $(i<j)$,
    which are multiples of $\sigma$.
    Assume that $\alphabetOf{u[\marginalSeq{i+l}{u}+1:\marginalSeq{i+l+1}{u}]}
    =\alphabetOf{u[\marginalSeq{j+l}{u}+1:\marginalSeq{j+l+1}{u}]}$
    for $l\in[\sigma-1]\cup\{0\}$.
    Note that the endpoints of the arches
    after $\marginalSeq{j+\sigma}{u}$
    depend exactly on the set of characters~$\alphabetOf{
    w[\marginalSeq{j+l}{u}+1:\marginalSeq{j+l+1}{u}]}$
    from $l=0$ to $\sigma-1$,
    and the suffix~$u[\marginalSeq{j+\sigma}{u}]$.
    Therefore, we can remove
    $u[\marginalSeq{i+\sigma}{u}+1:\marginalSeq{j+\sigma}{u}]$
    without altering $\gamma$ or $\myRarr{}$.
    This argument is illustrated in Figure~\ref{fig:marginalDownPump}.

    \begin{figure}[!h]
	\centering
        \includegraphics[scale=1]{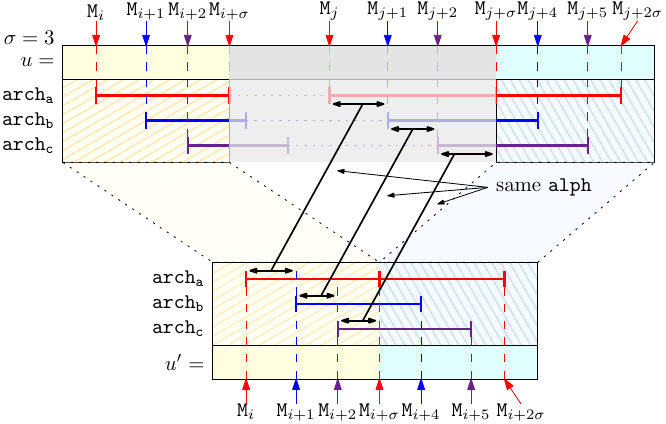}
    \caption{We can safely remove
    the substring~$u[\marginalSeq{i+\sigma}{u}+1:
    \marginalSeq{j+\sigma}{u}]$}
    to obtain $u'$
    with the same $\gamma$ and $\myRarr{}$,
    and all $\myKarr{}$ values are lower
    by $\frac{j-i}{\sigma}$.
    \label{fig:marginalDownPump}
    \end{figure}
    
    Now, let $\myKarr{}_2$ be the array of integers
    obtained by subtracting $\frac{j-i}{\sigma}$
    from all values in $\myKarr{}_1$.
    Then, $\subseqUnivSign{u[1:\marginalSeq{i+\sigma}{u}]
    u[\marginalSeq{j+\sigma}{u}+1:|u|]}=(\gamma,\myKarr{}_2,\myRarr{})$.
    Since there can be na\"ively $2^\sigma$ choices of
    $\alphabetOf{u[\marginalSeq{i+l}{u}+1:\marginalSeq{i+l+1}{u}]}$
    for each $l\in[\sigma-1]\cup\{0\}$,
    any string with $\universalityIdx{}{u}>(2^\sigma)^\sigma$
    is guaranteed to have integers~$i$ and $j$
    that satisfy the above conditions
    by the pigeonhole principle.
    Therefore, we will reach a string~$w$
    with $\universalityIdx{}{w}\le (2^\sigma)^\sigma$
    and $\subseqUnivSign{w}=(\gamma,\myKarr{}_2,\myRarr{})$,
    where $\myKarr{}_1[t]-\myKarr{}_2[t]=c$
    for some non-negative constant~$c$
    and all $t\in[\sigma]$
    if we repeatedly apply the same argument
    to remove arches.
    \qed
\end{proof}

Lemma~\ref{lem:marginalPumpDownAndUp} limits the search space for the candidate string corresponding to a tuple $(\gamma,\myKarr{},\myRarr{})$ 
by mapping valid subsequence universality signatures
to subsequence universality signatures
for strings with universality index at most $(2^\sigma)^\sigma$.
Therefore, we need to investigate those strings where there are up to $\sigma\cdot(1+(2^\sigma)^\sigma)+1$ terms in its marginal sequence.
The following lemma bounds the length of the substring between two consecutive marginal sequence terms in such a string. The conclusion of this line of thought follows then, in Corollary \ref{cor:searchSpaceLengthBound}. \looseness=-1

\begin{restatable}[]{lemma}{refoncePerMargin}\label{lem:oncePerMargin}
    For a given string~$w$,
    let $w=uvx$
    where $v=w[\marginalSeq{i}{w}+1:\marginalSeq{i+1}{w}]\neq \varepsilon$,
    and $u=w[1:\marginalSeq{i}{w}]$,
    and $x=w[\marginalSeq{i+1}{w}+1:|w|]$
    for some integer~$i\ge 1$.
    For a permutation~$v'$ of $\alphabetOf{v}$
    that ends with $v[|v|]$,
    we have $\subseqUnivSign{uvx}
    =\subseqUnivSign{uv'x}$.
\end{restatable}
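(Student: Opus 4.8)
Write $w=uvx$. The approach is to prove $\subseqUnivSign{uvx}=\subseqUnivSign{uv'x}$ directly, by checking that replacing $v$ by $v'$ changes none of the three components of the signature. Recall that $\subseqUnivSign{w}=(\gamma,\myKarr{},\myRarr{})$ is determined by the order in which the letters of $\alphabetOf{w}$ occur for the first time in $w$ and, for each $\ta=\gamma[t]\in\alphabetOf{w}$, by the arch factorisation of the suffix $w[\Xrank{w}{0}{\ta}+1:|w|]$, which yields $\myKarr{}[t]=\universalityIdx{\ta}{w}$ and $\myRarr{}[t]=\alphabetOf{\rest{\ta}{w}}$. The one fact that drives the whole argument is the following, which is immediate from the definition of the marginal sequence: every first-occurrence position $\marginalSeq{t}{w}=\Xrank{w}{0}{\gamma[t]}$ and every arch endpoint $\marginalSeq{l\sigma+t}{w}=\archEnd{\gamma[t]}{l}{w}$ of each of the $\sigma$ parallel arch factorisations is a term of the marginal sequence. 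Since the marginal sequence is non-decreasing, no term of it takes a value strictly between the two consecutive terms $\marginalSeq{i}{w}$ and $\marginalSeq{i+1}{w}$ (and these are distinct because $v\neq\varepsilon$); hence none of the above positions lies in the interior of the block $v=w[\marginalSeq{i}{w}+1:\marginalSeq{i+1}{w}]$, it can at most be an endpoint of $v$. Informally: $v$ is never split by a first occurrence or by an arch boundary of any parallel arch factorisation.

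I would first handle $\gamma$. Since $\alphabetOf{v'}=\alphabetOf{v}$ and $uvx$, $uv'x$ share the prefix $u$, the order of first occurrences can only be affected by a letter of $\alphabetOf{v}\setminus\alphabetOf{u}$. Each such letter occurs first inside $v$; by the observation its first occurrence is the last position of $v$. Hence $\alphabetOf{v}\setminus\alphabetOf{u}\subseteq\{v[|v|]\}$, this (possibly new) letter $v[|v|]$ occurs first at the end of $v$, and all other letters of $v$ already occur in $u$. As $v'$ is a permutation of $\alphabetOf{v}$ ending with $v[|v|]$, it realises the same first-occurrence order, so $\gamma$ is unchanged and $\alphabetOf{uvx}=\alphabetOf{uv'x}$.

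Next, $\myKarr{}$ and $\myRarr{}$. Fix a letter $\ta=\gamma[t]\in\alphabetOf{w}$. If $\ta\notin\alphabetOf{u}$, then by the previous step $\ta$ occurs first either at the last position of $v$ (hence also of $v'$) or inside $x$, and in both cases the suffix after the first $\ta$ is one and the same suffix of $x$ in $uvx$ and in $uv'x$: nothing changes. If $\ta\in\alphabetOf{u}$, let $u_t$ be the suffix of $u$ after the first $\ta$; then the suffix of $uvx$ (resp.\ $uv'x$) after the first $\ta$ is $u_tvx$ (resp.\ $u_tv'x$), whose arch factorisation is exactly the $\ta$-parallel arch factorisation of $w$. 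By the observation, $v$ comes after all arches contained in $u_t$ and is not split by any arch endpoint of this factorisation, so it sits inside a single block of it, and I would split into two cases. (a) $v$ lies inside an arch $A$, say $A=pvq$ with $p$ a possibly empty suffix of $u_t$ (after the earlier arches) and $q$ a possibly empty prefix of $x$; using $\alphabetOf{v'}=\alphabetOf{v}$, $v'[|v'|]=v[|v|]$ and, in the boundary case $q=\varepsilon$, the fact that a greedily chosen arch contains its last letter only once (whence $\alphabetOf{v'[1:|v'|-1]}=\alphabetOf{v}\setminus\{v[|v|]\}=\alphabetOf{v[1:|v|-1]}$), one checks that $pv'q$ is again a greedily chosen arch with the same last letter, consuming the same prefix $q$ of $x$; hence the earlier arches (inside $u_t$), the arch through $v'$, and all later arches (the arch factorisation of the remaining suffix of $x$) are exactly as before. (b) $v$ lies inside the rest, which then has the form $pvx$ with $p$ a suffix of $u_t$; since $\alphabetOf{pv'x}=\alphabetOf{pvx}\subsetneq\Sigma$, the string $pv'x$ is still the rest, so the number of arches and the rest's alphabet do not change. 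In all cases $\universalityIdx{\ta}{uv'x}=\universalityIdx{\ta}{w}$ and $\alphabetOf{\rest{\ta}{uv'x}}=\alphabetOf{\rest{\ta}{w}}$, and letters outside $\alphabetOf{w}$ carry the default signature entries in both; thus $\myKarr{}$ and $\myRarr{}$ are preserved, and $\subseqUnivSign{uvx}=\subseqUnivSign{uv'x}$.

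The hard part is exactly the driving observation together with the case analysis in the last step: one must argue carefully that $v$ occupies a single block of each of the $\sigma$ parallel arch factorisations, and that re-shuffling $v$ into $v'$ — which may also shorten the block — moves no arch boundary. The fiddly points are the boundary sub-cases ($p=\varepsilon$ or $q=\varepsilon$, i.e.\ $v$ a prefix or a suffix of an arch, and $v$ starting exactly at the rest), all of which are settled by the fact that a greedily chosen arch contains its completing last letter exactly once.
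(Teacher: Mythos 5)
Your proposal is correct and follows essentially the same route as the paper's proof: the key observation that arch endpoints (for every signature letter) are marginal-sequence terms and hence cannot fall strictly inside $v$, followed by the case split on whether the arch spanning $v$ closes exactly at $|uv|$ (where the "last letter occurs once in a greedy arch" fact gives $\alphabetOf{v'[1:|v'|-1]}=\alphabetOf{v[1:|v|-1]}$) or strictly inside $x$, plus the rest case. Your treatment is if anything slightly more explicit than the paper's, in that you also verify preservation of $\gamma$ and spell out why the driving observation holds.
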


\begin{proof}
    Since $v$ is between two consecutive marginal sequence terms,
    all arches for any signature letter in $uvx$
    must end at a position no more than $|u|$
    or at a position no less than $|uv|$
    Arches that end at a position no more than $|u|$ in $uvx$
    will end at the same positions in $uv'x$
    because they only depend on $u$.
    Suppose that an arch for signature letter~$\mathtt{a}$
    starts at a position no more than $|u|$
    and ends at a position no less than $|uv|$ in $uvx$.
    Let $i$ be the minimum non-negative integer
    that allows $\alphabetOf{\rest{\mathtt{a}}{u}vx[1:i]}=\Sigma$.
    If $i\ge 1$,
    we have $\alphabetOf{\rest{\mathtt{a}}{u}v}\ne\Sigma$.
    Since $\alphabetOf{\rest{\mathtt{a}}{u}v}=\alphabetOf{\rest{\mathtt{a}}{u}v'}$,
    the minimum integer~$i'$
    that allows $\alphabetOf{\rest{\mathtt{a}}{u}v'x[1:i']}=\Sigma$
    is equal to $i$.
    On the other hand, if $i=0$,
    then we have $\alphabetOf{\rest{\mathtt{a}}{u}v}=\Sigma$
    and $\alphabetOf{\rest{\mathtt{a}}{u}v[1:|v|-1]}\ne\Sigma$.
    Since we have $v'[|v'|]=v[|v|]$
    and $\alphabetOf{v'[1:|v'|-1]}=\alphabetOf{v[1:|v|-1]}$,
    the arch ends exactly at $|uv'|$ in $uv'x$.
    The number of arches that continues afterwards
    and the corresponding letters in the rest
    are thus equal for $uvx$ and $uv'x$.
    Finally, even if $v$ is in the rest
    of the arch factorization for a signature letter~$\mathtt{a}$,
    we still have $\alphabetOf{\rest{\mathtt{a}}{uvx}}
    =\alphabetOf{\rest{\mathtt{a}}{uv'x}}$
    because $\alphabetOf{v}=\alphabetOf{v'}$.
    \qed
\end{proof}

\begin{corollary}\label{cor:searchSpaceLengthBound}
    The tuple~$(\gamma,\myKarr{},\myRarr{})$
    is a valid subsequence universality signature 
    if and only if there exists a string~$w$ of length
    at most $\sigma\cdot(\sigma\cdot(1+(2^\sigma)^\sigma)+1)$ and a constant $c\in \N_0$
    that satisfies $\subseqUnivSign{w}=(\gamma,\myKarr{}-c,\myRarr{})$.
\end{corollary}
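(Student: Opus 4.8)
The plan is to derive the corollary from the two preceding lemmas: Lemma~\ref{lem:marginalPumpDownAndUp} reduces a valid signature to one realised by a string of bounded \emph{universality index}, and Lemma~\ref{lem:oncePerMargin} then compresses such a string, factor by factor, to one of bounded \emph{length}. Throughout, write $K=(2^\sigma)^\sigma$ and $B=\sigma\cdot(\sigma\cdot(1+K)+1)$.

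For the forward implication, assume $(\gamma,\myKarr{},\myRarr{})$ is valid. By Lemma~\ref{lem:marginalPumpDownAndUp} there is a string $w$ with $\universalityIdx{}{w}\le K$ and $\subseqUnivSign{w}=(\gamma,\myKarr{}-c,\myRarr{})$ for some $c\in\N_0$. First I would bound the number of terms of the marginal sequence of $w$: apart from $\marginalSeq{0}{w}$, the $\sigma$ signature positions $\marginalSeq{1}{w},\dots,\marginalSeq{\sigma}{w}$, and $\marginalSeq{\infty}{w}=|w|$, the remaining terms are the arch endpoints $\marginalSeq{i\sigma+j}{w}=\archEnd{\gamma[j]}{i}{w}$ for $j\in[\sigma]$, $i\in[\universalityIdx{\gamma[j]}{w}]$, and $\sum_{j}\universalityIdx{\gamma[j]}{w}\le\sigma K$ because each $\universalityIdx{\gamma[j]}{w}\le\universalityIdx{}{w}\le K$; hence consecutive marginal-sequence terms cut $w$ into at most $\sigma\cdot(1+K)+1$ factors (the rest included). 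The leftmost factor $w[1:\marginalSeq{1}{w}]=w[1:\Xrank{w}{0}{\gamma[1]}]$ has length $1$, since $\gamma[1]$ is the first letter of $w$. For every other nonempty factor $v=w[\marginalSeq{i}{w}+1:\marginalSeq{i+1}{w}]$ with $i\ge1$, Lemma~\ref{lem:oncePerMargin} lets me replace $v$ by any permutation of $\alphabetOf{v}$ ending with $v[|v|]$---that is, by a word of length $|\alphabetOf{v}|\le\sigma$---without changing $\subseqUnivSign{}$. Performing these replacements from right to left yields a word $w'$ with $\subseqUnivSign{w'}=\subseqUnivSign{w}=(\gamma,\myKarr{}-c,\myRarr{})$ and $|w'|\le\sigma\cdot(\sigma\cdot(1+K)+1)=B$, as required.

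For the converse, a word $w$ with $|w|\le B$ and $\subseqUnivSign{w}=(\gamma,\myKarr{}-c,\myRarr{})$ witnesses that $(\gamma,\myKarr{}-c,\myRarr{})$ is valid. Applying Lemma~\ref{lem:marginalPumpDownAndUp} to this signature produces a string $w''$ with $\universalityIdx{}{w''}\le K$ and $\subseqUnivSign{w''}=(\gamma,\myKarr{}-c-c',\myRarr{})$ for some $c'\in\N_0$; since $\myKarr{}-(\myKarr{}-c-c')=c+c'\in\N_0$, a second application of Lemma~\ref{lem:marginalPumpDownAndUp}, now with $\myKarr{}$ playing the role of the target array and $w''$ as witness, certifies that $(\gamma,\myKarr{},\myRarr{})$ is valid.

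The main obstacle I expect is the bookkeeping in the forward implication: one must verify that the successive invocations of Lemma~\ref{lem:oncePerMargin} remain legitimate, i.e.\ that shortening one marginal factor does not disturb the hypotheses needed for the others. Processing factors from right to left settles this, because replacing $v=w[\marginalSeq{i}{w}+1:\marginalSeq{i+1}{w}]$ by $v'$ leaves the prefix $w[1:\marginalSeq{i}{w}]$ unchanged, and all marginal-sequence terms at most $\marginalSeq{i}{w}$---hence all factor boundaries to the left of $v$---depend only on that prefix. The two degenerate cases (the leftmost factor, handled via $\Xrank{w}{0}{\gamma[1]}=1$, and empty factors, which need no shortening) require only a line each; everything else is a direct count.
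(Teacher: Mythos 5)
Your proposal is correct and follows exactly the route the paper intends: the paper leaves this corollary's proof implicit as the combination of Lemma~\ref{lem:marginalPumpDownAndUp} (bounding the universality index by $(2^\sigma)^\sigma$) with Lemma~\ref{lem:oncePerMargin} (compressing each factor between consecutive marginal-sequence terms to at most $\sigma$ letters), and your count of at most $\sigma\cdot(1+(2^\sigma)^\sigma)+1$ such factors matches the paper's. Your additional care about the right-to-left order of replacements, the length-one leftmost factor, and the converse via a second application of the pumping lemma fills in details the paper omits, without deviating from its argument.
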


We can now show the following result.
\begin{restatable}[]{lemma}{refmatchUnivNP}\label{lem:matchUnivNP}
    $\univmatch (\alpha,k)$ is in \NP.
\end{restatable}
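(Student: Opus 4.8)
The plan is to establish \NP\ membership by exhibiting a polynomial-size certificate together with a deterministic polynomial-time verifier. Rather than guessing the images $h(x)$ themselves (which, as observed above, may be exponentially long), I would guess, for each $x\in\var(\alpha)$, a triple $s_x=(\gamma_x,\myKarr{}_x,\myRarr{}_x)$ intended to be $\subseqUnivSign{h(x)}$; a variable occurring several times in $\alpha$ gets a single guess, which is consistent since $h$ assigns it a single image. Using the encoding of $\myKarr{}$ as a pair $(l,k')$ with $k'\le k$ discussed above, each triple fits in $O(\sigma!+(2^\sigma)^\sigma+\log\sigma+\log k)$ bits, so the whole certificate has size polynomial in $|\alpha|$ and $\log k$, since $\sigma\in O(1)$.

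\textbf{Step 1: validity of the guesses.} First I would check that each guessed $s_x$ is a valid subsequence universality signature, i.e.\ that it is the signature of \emph{some} string. Here the key tool is Corollary~\ref{cor:searchSpaceLengthBound}: $s_x$ is valid iff there is a word $w$ of length at most $L\coloneqq\sigma\cdot(\sigma\cdot(1+(2^\sigma)^\sigma)+1)$ and a $c\in\N_0$ with $\subseqUnivSign{w}=(\gamma_x,\myKarr{}_x-c,\myRarr{}_x)$. Since $L$ is a constant, the verifier enumerates all words over $\Sigma$ of length at most $L$ (constantly many), computes each one's signature in $O(1)$ time, and declares $s_x$ valid iff for one such word $w$, writing $\subseqUnivSign{w}=(\gamma',\myKarr{}',\myRarr{}')$, we have $\gamma'=\gamma_x$, $\myRarr{}'=\myRarr{}_x$, and $\myKarr{}_x[t]-\myKarr{}'[t]$ equals one and the same value $c\ge 0$ for every index $t$ (coordinates carrying $-\infty$ on both sides imposing no constraint). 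This costs $O(1)$ per variable, hence $O(|\var(\alpha)|)$ overall; if some guess is invalid, the verifier rejects.

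\textbf{Step 2: assembling $\subseqUnivSign{h(\alpha)}$ and the final test.} Next I would scan $\alpha$ from left to right, maintaining $\subseqUnivSign{h(\alpha_1)}$ for the prefix $\alpha_1$ read so far, initialised to the signature of $\varepsilon$. When the next symbol is a variable $x$, I extend to $\subseqUnivSign{h(\alpha_1 x)}$ from $\subseqUnivSign{h(\alpha_1)}$ and $s_x$ via Proposition~\ref{prop:signatureIsSufficient}; when it is the start of a maximal variable-free block $w$, I first compute $\subseqUnivSign{w}$ directly (in time polynomial in $|w|$, using the $\mathtt{X}$-ranker tables) and then $\subseqUnivSign{h(\alpha_1w)}$ via Proposition~\ref{prop:signatureIsSufficient}. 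There are at most $|\alpha|$ extension steps, and each takes time polynomial in $\sigma$ and $\log t$, where $t$, the largest integer appearing in the arrays $\myKarr{}$ involved, is bounded by a polynomial in $|\alpha|$ and $k$; hence the whole scan runs in time polynomial in $|\alpha|$ and $\log k$. Writing $(\gamma,\myKarr{},\myRarr{})$ for the signature obtained for $h(\alpha)$, the verifier accepts iff $\universalityIdx{}{h(\alpha)}=\myKarr{}[\sigma]+1$ equals $k$ (reading a $-\infty$ entry as universality index $0$).

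\textbf{Correctness and the main difficulty.} For completeness, if some substitution $h$ satisfies $\universalityIdx{}{h(\alpha)}=k$, then guessing $s_x=\subseqUnivSign{h(x)}$ passes Step~1 (validity witnessed by $h(x)$ itself, pumped down as in Lemmas~\ref{lem:marginalPumpDownAndUp} and~\ref{lem:oncePerMargin}) and, by Proposition~\ref{prop:signatureIsSufficient}, the signature assembled in Step~2 is exactly $\subseqUnivSign{h(\alpha)}$, so the final test succeeds. For soundness, if all guessed $s_x$ pass Step~1, choose for each $x$ a word $h(x)$ realising $s_x$ (it exists, up to prepending $\gamma_x^c$); by Proposition~\ref{prop:signatureIsSufficient} the assembled signature equals $\subseqUnivSign{h(\alpha)}$ for this $h$, so acceptance implies $\universalityIdx{}{h(\alpha)}=k$. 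I expect the only genuinely delicate points to be the two already settled before the statement: bounding the certificate size (the $\myKarr{}$-encoding together with $\sigma\in O(1)$) and, above all, reducing ``does \emph{any}, possibly enormous, string realise this signature?'' to a brute-force search over constantly many short words via Corollary~\ref{cor:searchSpaceLengthBound}; everything that remains is careful bookkeeping with Proposition~\ref{prop:signatureIsSufficient}.
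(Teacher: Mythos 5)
Your proposal is correct and follows essentially the same route as the paper: guess one subsequence universality signature per variable (with the compact $(l,k')$ encoding of $\myKarr{}$), validate each guess by brute force over the constantly many words of length at most $\sigma\cdot(\sigma\cdot(1+(2^\sigma)^\sigma)+1)$ via Corollary~\ref{cor:searchSpaceLengthBound}, assemble $\subseqUnivSign{h(\alpha)}$ left to right with Proposition~\ref{prop:signatureIsSufficient}, and read off $\universalityIdx{}{h(\alpha)}=\myKarr{}[\sigma]+1$. The only difference is that you spell out the completeness and soundness directions explicitly, which the paper leaves implicit.
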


\begin{proof}
Follows from Proposition~\ref{prop:signatureIsSufficient} and Corollary~\ref{cor:searchSpaceLengthBound}.
Firstly, for a guessed sequence of universality signatures~$(\gamma_x,\myKarr{}_x,\myRarr{}_x)$, for $x\in \var(\alpha)$, we check their validity. For that, we enumerate all strings of length up to the constant~$\sigma\cdot(\sigma\cdot(1+(2^\sigma)^\sigma)+1)$ over $\Sigma$ and see if there exist strings~$w_x$ such that $\subseqUnivSign{w_x}=(\gamma_x,\myKarr{}_x-c_x,\myRarr{}_x)$ for some constant $c_x\leq k$. Since $\sigma$ is constant, this takes polynomial time. We then use Proposition \ref{prop:signatureIsSufficient} to check if the guessed signatures lead to an assignment $h$ of the variables such that $\iota(h(\alpha))=k$, as already explained. Since we have a polynomial size bound on the certificate and a deterministic verifier that runs in polynomial time, we obtain that $\univmatch (\alpha,k)$ is in \NP.
\qed
\end{proof}

Based on Lemmas~\ref{lemma:matchUnivnphard} and \ref{lem:matchUnivNP}, the following theorem follows.
\begin{theorem}
    $\univmatch$ is \NP-complete.
\end{theorem}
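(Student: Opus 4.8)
The plan is to establish the theorem as an immediate consequence of the two lemmas already proved in this section, namely Lemma~\ref{lemma:matchUnivnphard} (which gives \NP-hardness of $\univmatch$) and Lemma~\ref{lem:matchUnivNP} (which gives membership in \NP). Since a decision problem is \NP-complete precisely when it is both \NP-hard and in \NP, combining these two statements directly yields the claim, and there is essentially nothing left to do beyond citing them. So the proof is just one line: ``Follows from Lemmas~\ref{lemma:matchUnivnphard} and~\ref{lem:matchUnivNP}.''

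If I wanted to re-derive the two ingredients from scratch, the approach would be as follows. For \NP-hardness, I would reduce from $\sat$: encode each Boolean variable $x_i$ by two string-variables $\mathtt{z_i},\mathtt{u_i}$ and design gadgets, concatenated with high-multiplicity blocks over a four-letter alphabet $\{\mathtt{0},\mathtt{1},\hashtag,\dollar\}$, so that (i) ``binarisation'' gadgets force each variable image into $\{\mathtt{0},\mathtt{1}\}^\ast$ (any stray $\hashtag$ or $\dollar$ blows the arch count past the target $k$), (ii) ``Boolean'' gadgets force each image into $\mathtt{0}^\ast\cup\mathtt{1}^\ast$, (iii) ``complementation'' gadgets $\dollar\mathtt{z_i}\mathtt{u_i}\hashtag$ force $\mathtt{z_i}$ and $\mathtt{u_i}$ to be repetitions of \emph{different} letters, and (iv) ``clause'' gadgets $\dollar\mathtt{0}\rho(y_j^1)\rho(y_j^2)\rho(y_j^3)\hashtag$ force at least one literal per clause to be mapped to $\mathtt{1}^+$. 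Setting $k=5n+m+2$ and arguing via greedy arch factorisation from left to right (each prefix block closing its last arch exactly at its final $\hashtag$, so the analysis of the $\xi,\delta$ suffix decouples), one shows $\iota(h(\alpha))=k$ is achievable iff $\varphi$ is satisfiable; this is carried out in the proof of Lemma~\ref{lemma:matchUnivnphard}.

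For membership in \NP, the obstacle — which I expect to be the main difficulty of the whole result — is that the shortest witness string for a variable image may be exponentially long in the input size (e.g.\ for $\alpha=X_1$ one needs a $k$-universal word of length $k\sigma$, exponential in $\log k$). The fix is to guess, for each variable, only its \emph{subsequence universality signature} $\subseqUnivSign{\cdot}=(\gamma,\myKarr{},\myRarr{})$ rather than the image itself. One then needs: (a) these signatures compose under concatenation in polynomial time (Proposition~\ref{prop:signatureIsSufficient}), so $\iota(h(\alpha))$ can be computed incrementally from the guessed signatures; (b) a signature has polynomial (indeed, for constant $\sigma$, essentially bounded) encoding size — here using $\myKarr{}[1]-\myKarr{}[|\gamma|]\le 1$ and monotonicity of $\myKarr{}$ to encode it as $(l,k')$ with $k'\le k$; and (c) validity of a guessed signature can be checked in polynomial time, which is exactly where Lemma~\ref{lem:marginalPumpDownAndUp} (``pump down'' high-universality strings) and Corollary~\ref{cor:searchSpaceLengthBound} come in: every valid signature is realised, up to a uniform shift $c\le k$ of all $\myKarr{}$ values, by some string of length at most the constant $\sigma\cdot(\sigma\cdot(1+(2^\sigma)^\sigma)+1)$, so validity is decided by brute force over a constant-size set of strings. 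Assembling (a)–(c) gives a polynomial-size certificate with a polynomial-time verifier, hence $\univmatch\in\NP$ (Lemma~\ref{lem:matchUnivNP}), and together with \NP-hardness this completes the proof that $\univmatch$ is \NP-complete.
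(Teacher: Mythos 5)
Your proof is correct and matches the paper exactly: the theorem is stated as an immediate consequence of Lemma~\ref{lemma:matchUnivnphard} (\NP-hardness via the $\sat$ reduction) and Lemma~\ref{lem:matchUnivNP} (\NP-membership via guessed subsequence universality signatures). Your sketches of the two ingredients also follow the same gadget construction and the same signature/pumping argument as the paper.
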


Further, we describe two classes of patterns, defined by structural restrictions on the input patterns, for which $\univmatch$ can be solved in polynomial time. 

\begin{restatable}[]{proposition}{refoneVarOneOcc}\label{rem:oneVarOneOcc}
    {\upshape a)} $\univmatch (\alpha,k)$ is in \P~when there exists a variable
    that occurs only once in $\alpha$. As such, $\univmatch (\alpha,k)$ is in \P~for the heavily studied class of regular patterns (see, e.g., \cite{DBLP:journals/toct/FernauMMS20} and the references therein), where each variable occurs only once.
    {\upshape b)} $\univmatch (\alpha,k)$ is in \P~when
    $|\var(\alpha)|$ is constant.
\end{restatable}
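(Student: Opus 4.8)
The plan is to handle the two parts by separate reductions: part~a) to a single universality-index computation, and part~b) to a constant number of integer feasibility problems of constant dimension.

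\textbf{Part a).} Suppose $x\in\var(\alpha)$ occurs exactly once, so $\alpha=\alpha_1 x\alpha_2$ with $x\notin\var(\alpha_1\alpha_2)$, and let $\hat\alpha$ be the word obtained from $\alpha$ by deleting every variable. I would first record two elementary facts: (i) $\iota$ is monotone under the subsequence order, i.e.\ $u\preceq v$ implies $\iota(u)\le\iota(v)$ (if $u$ is $j$-universal then $\Sigma^{\le j}=\subseqset{j}{u}\subseteq\subseqset{j}{v}$); and (ii) inserting or deleting a single letter changes $\iota$ by at most $1$ (a known property, see e.g.\ \cite{day2021edit,BarkerFHMN20}). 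By (i), $\hat\alpha\preceq h(\alpha)$ gives $\iota(h(\alpha))\ge\iota(\hat\alpha)$ for every substitution $h$, so if $k<\iota(\hat\alpha)$ the answer is \emph{no}. If $k\ge\iota(\hat\alpha)$, fix the substitution that erases all variables other than $x$, so that $h(\alpha)=\hat\alpha_1\,h(x)\,\hat\alpha_2$ with $\hat\alpha_1\hat\alpha_2=\hat\alpha$; letting $h(x)$ range from $(1\cdots\sigma)^N$ with $N>k$ (whose image has index $>k$) down to $\varepsilon$ (image of index $\iota(\hat\alpha)\le k$) by deleting one letter at a time, fact~(ii) forces the value $k$ to be attained in between. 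Hence the answer is \emph{yes} iff $\iota(\hat\alpha)\le k$, which is decidable in $O(|\alpha|)$ time via the arch factorisation of $\hat\alpha$. Regular patterns have each variable occurring once, so (if non-ground) they fall under this case; a variable-free $\alpha$ is settled by testing $\iota(\alpha)=k$.

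\textbf{Part b).} Write $\var(\alpha)=\{X_1,\dots,X_p\}$ with $p=O(1)$ and let $n_j\ge 1$ be the number of occurrences of $X_j$. For each $j$ there are two cases. If $h(X_j)$ is not universal over $\Sigma$ (in particular if empty), then $\iota(h(X_j))=0$ and, by Corollary~\ref{cor:searchSpaceLengthBound}, its subsequence universality signature is realised by a word of length bounded by a constant depending only on $\sigma$; so we may simply guess such a short witness and substitute it into $\alpha$. If $h(X_j)$ is universal over $\Sigma$, we guess only the \emph{shape} of its signature — the permutation $\gamma_j$ of $\Sigma$, the array $\myRarr{}_j$, and the number $\ell_j$ of coordinates of $\myKarr{}_j$ equal to its maximum — leaving the maximum value $k'_j=\iota(h(X_j))\ge 1$ as a free integer parameter; by the ``only if'' direction of Lemma~\ref{lem:marginalPumpDownAndUp} (prepending $\gamma_j^{\,c}$) every $k'_j\ge\lambda_j$ is realisable for a constant $\lambda_j$ that can be read off from the short strings of Corollary~\ref{cor:searchSpaceLengthBound}, and no other value is. Since $\sigma=O(1)$, there are only constantly many shape-tuples and short witnesses, hence $O(1)$ configurations to enumerate over $X_1,\dots,X_p$.

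Fix one configuration. After substituting the short witnesses, $\alpha$ contains only universal variables, and I would run the incremental signature computation underlying Proposition~\ref{prop:signatureIsSufficient} along $\alpha$, carrying the parameters $k'_j$ symbolically. The key claim is that $\iota(h(\alpha))$ comes out as an \emph{affine} function $\sum_j n_j k'_j + b$ with $b$ and the $n_j$ computable in polynomial time: a universal factor contains every letter, so in the greedy arch factorisation no arch jumps over an occurrence of a universal $X_j$, and therefore each such occurrence contributes exactly $k'_j$ complete arches together with a boundary correction that is independent of the $k'_j$ and absorbed into $b$ — formally this is obtained by unfolding the concatenation rule for signatures. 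Deciding the configuration then reduces to deciding whether $\sum_j n_j k'_j = k-b$ admits a solution with each $k'_j\ge\max(1,\lambda_j)$; this is feasibility of one linear Diophantine equation with $O(1)$ unknowns and coefficients $n_j\le|\alpha|$, solvable in polynomial time (e.g.\ by a shortest-path computation over residue classes modulo the smallest $n_j$, which yields the minimal representable value in each class). The answer to $\univmatch(\alpha,k)$ is \emph{yes} iff some configuration is feasible, so the whole procedure runs in polynomial time.

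The main obstacle is the affineness claim in part~b): one must verify, from the concatenation behaviour of subsequence universality signatures, that once the shapes are fixed the index of $h(\alpha)$ depends on the $k'_j$ only through $\sum_j n_j k'_j$ plus an additive constant, with all interactions between variable occurrences and intervening terminal blocks captured by that constant; the remaining steps (the short-witness reduction via Corollary~\ref{cor:searchSpaceLengthBound}, the finite enumeration, and the Diophantine feasibility test) are routine once this is established. In part~a) the only delicate point is the ``interval'' step, which rests squarely on the single-letter edit bound for $\iota$.
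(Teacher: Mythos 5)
Your proposal is correct, and it splits into two halves of different character relative to the paper. For part~a) you take a genuinely different route: the paper explicitly constructs a witness substitution in three stages (erase all other variables; append a permutation of $\Sigma\setminus\alphabetOf{\rest{}{h(\alpha_1)}}$ to close the pending arch, which raises the index by at most one; then append $\gamma^i$ to hit $k$ exactly), whereas you combine monotonicity of $\iota$ under the subsequence order with the single-letter-edit stability of $\iota$ and invoke a discrete intermediate-value argument along the chain from $h(x)=(1\cdots\sigma)^N$ down to $h(x)=\varepsilon$. Both yield the same linear-time test ``$\iota(\hat\alpha)\le k$''; the paper's version is self-contained and constructive, yours is shorter but outsources the key step to the known fact that one insertion or deletion changes $\iota$ by at most one (which is indeed provable by noting that deleting a letter destroys at most one arch of the factorisation). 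For part~b) your argument is essentially the paper's: enumerate the constantly many signature shapes $(\gamma_j,\ell_j,\myRarr{}_j)$, use Lemma~\ref{lem:marginalPumpDownAndUp} and Corollary~\ref{cor:searchSpaceLengthBound} to reduce realisability to short witnesses plus a free pump parameter, observe that for fixed shapes the total arch count is affine in those parameters with coefficients $|\alpha|_{x_j}$, and decide feasibility of the resulting one-equation integer program in fixed dimension (the paper uses Lenstra's algorithm where you propose a residue-class computation; both are polynomial). The affineness claim you flag as the main obstacle is exactly what the paper establishes by its incremental bookkeeping — each occurrence of a universal variable contributes $k_i'$ or $k_i'-1$ arches according to whether the pending arch closes within the first $\ell_i$ signature letters, a case distinction depending only on the shape — so your stated reason for it is the right one and no gap remains.
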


\begin{proof}
a)    Let $x$ be the variable that occurs only once in $\alpha$.
    Then, we can uniquely rewrite $\alpha=\alpha_1 x \alpha_2$.
    We will successively define three substitutions~$h_1$, $h_2$, $h_3$,
    all of which map variables that are not $x$
    to the empty string,
    i.e., $h_1(x')=h_2(x')=h_3(x')=\emptyword$
    for all $x'\in \mathcal{X}\setminus\{x\}$.
    Now, let $h_1(x)=\emptyword$ as well.
    We claim that $k\ge\universalityIdx{}{h_1(\alpha)}$
    if and only if $\univmatch (\alpha,k)$ is true.
    For any substitution~$h$,
    we have $\universalityIdx{}{h_1(\alpha)}\le
    \universalityIdx{}{h(\alpha)}$
    because $\subseq{h_1(\alpha)}{h(\alpha)}$.
    Therefore, the problem is false
    if $k<\universalityIdx{}{h_1(\alpha)}$.
    Moreover, if $k=\universalityIdx{}{h_1(\alpha)}$,
    the problem is true by definition.
    Now, assume $k>\universalityIdx{}{h_1(\alpha)}$
    and let $h_2(x)$ be a permutation of
    $\Sigma\setminus\rest{}{h_2(\alpha_1)}$.
    Then, $\universalityIdx{}{h_2(\alpha)}
    =\universalityIdx{}{h_2(\alpha_1)}+1+\universalityIdx{}{h_2(\alpha_2)}$,
    because $\rest{}{h_2(\alpha_1x)}=\emptyword$.
    Note that we either have $\universalityIdx{}{h_2(\alpha)}
    =\universalityIdx{}{h_1(\alpha)}$
    or $\universalityIdx{}{h_1(\alpha)}+1$.
    Finally, for an integer~$i=k-\universalityIdx{}{h_2(\alpha)}$,
    let $h_3(x)=h_2(x)\gamma^i$
    where $\gamma$ is a permutation of $\Sigma$.
    We now have $\universalityIdx{}{h_3(\alpha)}
    =\universalityIdx{}{h_3(\alpha_1)h_2(x)}+i
    +\universalityIdx{}{h_3(\alpha_2)}
    =i+\universalityIdx{}{h_2(\alpha)}$.
    Thus, for any $k\ge\universalityIdx{}{h_1(\alpha)}$,
    there exists a substitution~$h$
    such that $k=\universalityIdx{}{h(\alpha)}$.
    We therefore compute $\universalityIdx{}{h_1(\alpha)}$,
    the universality index of the image,
    and then return true if and only if
    $\universalityIdx{}{h_1(\alpha)}\le k$,
    which can be done in polynomial time.
    \qed
%\end{proof}

%\refconstantVars*

%\begin{proof}

b)    The subsequence universality signature~$\subseqUnivSign{h(x_i)}$
    of the image of some variable~$x_i\in\var(\alpha)$
    under substitution~$h$
    consists of three items,
    a permutation~$\gamma_i$ of a subset of $\Sigma$,
    an array~$\myKarr{}_i$ of $\sigma$ integers,
    and an array~$\myRarr{}_i$ of $\sigma$ subsets of $\Sigma$.
    Recall from the size estimation of such a universality signature
    that $\myKarr{}_i$ can be represented with two integers~$l_i$ and $k_i$,
    where $\myKarr{}_i[j]=k_i$ for all $j\in[l_i]$
    and $\myKarr{}_i[j]=k_i-1$ for all $j\in[l_i+1:|\gamma_i|]$.
    Note that there are $\sum_{j=0}^{\sigma}j!$ choices for $\gamma$,
    at most $\sigma$ choices for $l_i$,
    and $(2^\sigma)^\sigma$ choices for $\myRarr{}$.
    Therefore, if we treat $k_i$
    as a variable whose value should be determined,
    we can enumerate for all possible assignments
    of $\gamma_i$, $l_i$, and $\myRarr{}_i$
    for all $i\in [|\var(\alpha)|]$
    in constant time
    under the assumption that $\sigma$ and $|\var(\alpha)|$
    are constant.
    
    Now, for a fixed set of $\gamma_i$s, $l_i$s,
    and $\myRarr{}_i$s,
    we find the minimum value~$k_i'$ of $k_i$
    that validates $(\gamma_i,\myKarr{}_i,\myRarr{}_i)$
    as a subsequence universality signature
    by enumerating all strings up to
    length~$\sigma\cdot(\sigma\cdot(1+(2^\sigma)^\sigma)+1)$
    using Corollary~\ref{cor:searchSpaceLengthBound}.
    If no such $k_i'$ exists, we move on to the next set
    of $\gamma_i$s, $l_i$s, and $\myRarr{}_i$s.
    Since Lemma~\ref{lem:marginalPumpDownAndUp}
    allows us to add an arbitrary number of arches
    while not altering $\gamma_i$ and $\myRarr{}_i$,
    we first assume that the number of additional arches is zero
    and compute how many more arches we need for $h(\alpha)$
    to reach a universality index of $k$.
    We compute this number by counting the number of arches
    through an arch factorization on $\alpha$.
    Specifically, for each rewriting~$\alpha_1x_i\alpha_2$,
    we compute the minimal $j\in[|\gamma|]$
    that allows $\alphabetOf{\rest{}{h(\alpha_1)}}\cup\alphabetOf{\gamma[1:j]}=\Sigma$.
    If no such $j$ exists,
    then we simply compute
    $\alphabetOf{\rest{}{h(\alpha_1x_i)}}
    =\alphabetOf{\rest{}{h(\alpha_1)}}\cup\alphabetOf{\gamma}$
    without incrementing the number of arches.
    Then, if $j\le l_i$, we add $k_i'$ arches to the total arch count.
    If $j > l_i$, we add $k_i'-1$ arches instead.
    Finally, we continue the arch factorization process
    with $\alphabetOf{\rest{}{h(\alpha_1x_i)}}=\myRarr{}_i[j]$.
    This way, we can compute
    the minimum number of arches the image of $\alpha$ can have
    for a fixed set of $\gamma_i$s, $l_i$s, and $\myRarr{}_i$s.
    
    Let $d$ be the total number of additional arches we need
    for the image of $\alpha$
    to reach a universality index of $k$.
    Note that an additional arch for each variable~$x_i$
    will contribute to $|\alpha|_{x_i}$ more arches
    in the image of $\alpha$.
    However, if the subsequence universality signature
    features $\gamma_i$ with $\alphabetOf{\gamma_i}\ne \Sigma$,
    then we cannot add any more arches for each variable.
    Let $I=\{i\in[|\var(\alpha)|]\mid |\gamma_i|=\sigma\}$.
    Now, the problem boils down to finding
    how many additional arches we need
    for the image of each variable~$x_i$
    with $i\in I$
    while making the universality index
    of the image of $\alpha$ exactly $k$.
    Let $d_i$ be the number of additional arches
    for each occurrence of variable~$x_i$
    with $i\in I$.
    We can solve for $d_i$s
    the following system of linear inequalities:
    \begin{flalign*}
    &\sum_{i\in I}|\alpha|_{x_i}d_i\le d,\\
    &\sum_{i\in I}-|\alpha|_{x_i}d_i\le -d,\\
    &-d_i\le 0~\forall i\in I
    \end{flalign*}
    Note that the first two inequalities
    imply $\sum_{i=1}^{|\var(\alpha)|}|\alpha|_{x_i}d_i=d$
    and the last $|\var(\alpha)|$ inequalities
    enforce positive values for each $d_i$.
    If there is an integer solution for the system,
    then we can assign $d_i$ more arches for the image of $x_i$,
    and the universality index of the image of $\alpha$
    will be exactly $k$.
    Because $|\var(\alpha)|$ is a constant,
    this system can be solved in time polynomial in $\log H$,
    where $H$ is the maximum between $k$
    and the greatest coefficient of a variable
    in the above system (in absolute value)~\cite{Lenstra83}.
    \qed
\end{proof}

\section{$\simmatch$}\label{section:patternMatchingSimk}

Further, we discuss the $\simmatch$ problem. In the case of $\simmatch$, we are given a pattern $\alpha$, a word $w$, and a natural number $k\leq n$, and we want to check the existence of a substitution $h$ with $h(\alpha)\sim_k w$. The first result is immediate: $\simmatch$ is \NP-hard, because $\simmatch(\alpha,w,|w|)$ is equivalent to $\match(\alpha,w)$, and $\match$ is \NP-complete.

\begin{lemma}\label{lem:simmatchNPH}
$\simmatch$ is \NP-hard.
\end{lemma}
\begin{proof}
    We note that $\simmatch(\alpha,w,|w|)$ is equivalent to the \NP-complete $\match(\alpha,w)$.
\qed\end{proof}

To understand why this results followed much easier than the corresponding lower bound for $\univmatch$, we note that in $\simmatch$ we only ask for $h(\alpha)\sim_k w$ and allow for $h(\alpha)\sim_{k+1} w$, while in $\univmatch$ $h(\alpha)$ has to be $k$-universal but not $(k+1)$-universal. So, in a sense, $\simmatch$ is not strict, while $\univmatch$ is strict. So, we can naturally consider the following problem. 

\begin{problemdescription}
  \problemtitle{Matching under Strict Simon's Congruence: $\simStrictMatch(\alpha,w,k)$}
  \probleminput{Pattern $\alpha$, $|\alpha|=m$, word $w$, $|w|=n$, and $k\in [n]$.}
  \problemquestion{Is there a substitution $h$ with $h(\alpha) \sim_k w$ and $h(\alpha) \not\sim_{k+1} w$?}
\end{problemdescription}

Adapting the reduction from Lemma \ref{lemma:matchUnivnphard}, we can show that $\simStrictMatch$ is \NP-hard.
%the following lemma.

\begin{restatable}[]{lemma}{refsimStrictMatchNPH}\label{lem:simStrictMatchNPH}
$\simStrictMatch$ is \NP-hard.
\end{restatable}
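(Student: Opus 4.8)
The plan is to reuse almost verbatim the $\sat$-reduction built in the proof of Lemma~\ref{lemma:matchUnivnphard}, and to take advantage of the fact that for a word $u$ with $\alphabetOf{u}=\Sigma$ we have $u\sim_k (1\cdots\sigma)^k$ exactly when $\iota(u)\ge k$, while $u\not\sim_{k+1}(1\cdots\sigma)^{k+1}$ exactly when $\iota(u)\le k$; that is, $u\sim_k (1\cdots\sigma)^k$ and $u\not\sim_{k+1}(1\cdots\sigma)^{k+1}$ together say precisely $\iota(u)=k$. So I would define the $\simStrictMatch$-instance as $\bigl(\alpha, (1\cdots\sigma)^{k}, k\bigr)$, where $\alpha$ and $k=5n+m+2$ are exactly the pattern and number produced by the reduction in Lemma~\ref{lemma:matchUnivnphard} for the input $\sat$-formula $\varphi$ (here $\Sigma=\{\mathtt 0,\mathtt 1,\hashtag,\dollar\}$, so $\sigma=4$, and $w=(\mathtt 0\mathtt 1\hashtag\dollar)^k$). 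The construction is clearly polynomial-time since the one from Lemma~\ref{lemma:matchUnivnphard} is.

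\textbf{Correctness.} For any substitution $h$, note first that $h(\alpha)$ always uses the full alphabet $\Sigma$: the pattern $\alpha$ already contains the constants $\mathtt 0,\mathtt 1,\hashtag,\dollar$ (e.g. inside $\mathtt\pi_{\hashtag},\mathtt\pi_{\dollar}$ and the clause gadgets), so $\alphabetOf{h(\alpha)}=\Sigma$ regardless of $h$. Consequently $h(\alpha)\sim_k w$ iff $h(\alpha)$ is $k$-universal iff $\iota(h(\alpha))\ge k$, and $h(\alpha)\not\sim_{k+1}w'$ (where $w'=(1\cdots\sigma)^{k+1}$) — but since $w=(1\cdots\sigma)^k$, what $\simStrictMatch$ asks is $h(\alpha)\not\sim_{k+1} w$, i.e. $\subseqset{k+1}{h(\alpha)}\ne \subseqset{k+1}{(1\cdots\sigma)^k}$. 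Because $(1\cdots\sigma)^k$ has $\iota=k$, this holds iff $h(\alpha)$ is not $(k+1)$-universal \emph{and} $h(\alpha)$ is still $k$-universal (if $h(\alpha)$ were not even $k$-universal then its set of length-$(k+1)$ subsequences would be a strict subset of $\Sigma^{\le k+1}$ missing some word of length $\le k$, and $(1\cdots\sigma)^k$'s length-$(k+1)$ subsequences are exactly $\Sigma^{\le k}$, so again they differ — so in every case where $\iota(h(\alpha))\ne k$ the two conditions cannot both be met, and when $\iota(h(\alpha))=k$ they are; a one-line case check confirms this). Hence the conjunction "$h(\alpha)\sim_k w$ and $h(\alpha)\not\sim_{k+1}w$" is equivalent to $\iota(h(\alpha))=k$. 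So the $\simStrictMatch$-instance is a yes-instance iff there is $h$ with $\iota(h(\alpha))=k$, which by the correctness of the reduction of Lemma~\ref{lemma:matchUnivnphard} holds iff $\varphi\in\sat$.

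\textbf{Main obstacle.} The only genuinely non-routine point is the comparison at level $k+1$: in $\simStrictMatch$ the target word is fixed to $w=(1\cdots\sigma)^k$, not $(1\cdots\sigma)^{k+1}$, so I must be careful that "$h(\alpha)\not\sim_{k+1} w$" really captures "not $(k+1)$-universal" in the presence of the constraint "$\sim_k w$". The clean way to handle this is to observe that for any word $u$ with $\alphabetOf{u}=\Sigma$, the three statements (i) $\iota(u)=k$, (ii) $u\sim_k(1\cdots\sigma)^k$, and (iii) $u\sim_k(1\cdots\sigma)^k \wedge u\not\sim_{k+1}(1\cdots\sigma)^k$ are mutually equivalent — (ii)$\Leftrightarrow$(i) is standard, and for (iii) one checks: if $\iota(u)=k$ then $\subseqset{k+1}{u}\supsetneq \Sigma^{\le k}=\subseqset{k+1}{(1\cdots\sigma)^k}$ (it contains some length-$(k+1)$ word), giving $u\not\sim_{k+1}(1\cdots\sigma)^k$; conversely if (iii) holds then $u$ is $k$-universal, and were $u$ also $(k+1)$-universal we could derive $u\sim_{k+1}(1\cdots\sigma)^{k+1}$, but that still need not contradict (iii) directly — so instead argue from $u\not\sim_{k+1}(1\cdots\sigma)^k$ together with $k$-universality of $u$: since $\subseqset{k+1}{(1\cdots\sigma)^k}=\Sigma^{\le k}$ and $k$-universality already forces $\Sigma^{\le k}\subseteq\subseqset{k+1}{u}$, the inequality must come from $u$ having some subsequence of length exactly $k+1$, i.e. $u$ is $(k+1)$-universal is \emph{false}? no — having one length-$(k+1)$ subsequence does not make $u$ $(k+1)$-universal, but it does make $\subseqset{k+1}{u}\ne\Sigma^{\le k}$, which is all we need; and it does not by itself preclude $(k+1)$-universality, so to finish I additionally note $\iota(u)\le k+1$ forced nowhere — therefore I should strengthen the target instead. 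To avoid this subtlety entirely, I would actually set the $\simStrictMatch$-target to $w=(1\cdots\sigma)^{k}$ but phrase the argument through $\iota$: $h(\alpha)\sim_k w \Leftrightarrow \iota(h(\alpha))\ge k$ and $h(\alpha)\not\sim_{k+1} w \Leftrightarrow \iota(h(\alpha))\le k$ (the latter because the unique $\sim_{k+1}$-class of words with $\alphabetOf{}=\Sigma$ and containing $(1\cdots\sigma)^k$ is exactly $\{u : \iota(u)=k\}$, as $\sim_{k+1}$ on full-alphabet words refines $\sim_k$ and separates universality index $k$ from $\ge k+1$). Combining, the conjunction is $\iota(h(\alpha))=k$, and the reduction of Lemma~\ref{lemma:matchUnivnphard} finishes the proof; the bookkeeping around these $\sim_{k+1}$ equivalences is the one place requiring care, and I expect it to be short once the observation "$\sim_{k+1}$ on full-alphabet words distinguishes $\iota=k$ from $\iota\ge k+1$" is isolated.
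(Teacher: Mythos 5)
Your overall strategy --- reuse the reduction of Lemma~\ref{lemma:matchUnivnphard} and pick a target word so that ``$h(\alpha)\sim_k w$ and $h(\alpha)\not\sim_{k+1}w$'' expresses $\iota(h(\alpha))=k$ --- is exactly the paper's, but your concrete choice $w=(1\cdots\sigma)^k$ breaks the reduction, and the ``main obstacle'' you correctly sense is never actually resolved. The problem is the second conjunct. With $w=(1\cdots\sigma)^k$ we do have $h(\alpha)\sim_k w \Leftrightarrow \iota(h(\alpha))\ge k$, but $h(\alpha)\not\sim_{k+1}w$ does \emph{not} imply $\iota(h(\alpha))\le k$. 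Note first that $\subseqset{k+1}{(1\cdots\sigma)^k}\ne\Sigma^{\le k}$ (the word has length $k\sigma\ge k+1$, so it has subsequences of length exactly $k+1$), contrary to what you assert midway through; and it is also a proper subset of $\Sigma^{\le k+1}$ (e.g.\ $1^{k+1}$ is not a subsequence, since $1$ occurs only $k$ times). Consequently, if $h$ maps some variable to a hugely universal word such as $(1\cdots\sigma)^{N^{10}}$, then $\subseqset{k+1}{h(\alpha)}=\Sigma^{\le k+1}\ne\subseqset{k+1}{w}$ and $\subseqset{k}{h(\alpha)}=\Sigma^{\le k}=\subseqset{k}{w}$, so both conjuncts hold even though $\iota(h(\alpha))\gg k$. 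Such an $h$ exists for \emph{every} formula $\varphi$, so your instance is always a yes-instance and the reduction is vacuous. The closing claim that ``the $\sim_{k+1}$-class of $(1\cdots\sigma)^k$ among full-alphabet words is exactly $\{u:\iota(u)=k\}$'' is likewise false: already $u=(1\cdots\sigma)^k1$ has $\iota(u)=k$ but $1^{k+1}\in\subseqset{k+1}{u}\setminus\subseqset{k+1}{(1\cdots\sigma)^k}$.

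The fix is to make the target word $(k+1)$-universal rather than $k$-universal: then $\subseqset{k}{w}=\Sigma^{\le k}$ and $\subseqset{k+1}{w}=\Sigma^{\le k+1}$, so $h(\alpha)\sim_k w$ means $k$-universal and $h(\alpha)\not\sim_{k+1}w$ means not $(k+1)$-universal, and the conjunction is exactly $\iota(h(\alpha))=k$. This is what the paper does, taking $w=(\mathtt{10\dollar\hashtag})^{5n+m+3}$ with $k=5n+m+2$; taking $w=(1\cdots\sigma)^{k+1}$ would serve equally well. With that one change, the rest of your argument (polynomiality, and invoking the correctness of the reduction from Lemma~\ref{lemma:matchUnivnphard} to conclude that a substitution with $\iota(h(\alpha))=k$ exists iff $\varphi$ is satisfiable) goes through.
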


\begin{proof}
    We refer to the notations from Lemma \ref{lemma:matchUnivnphard}. We use the same reduction from $\sat$ and note that $\alpha$ can either be mapped to a string $h(\alpha)$ with $\iota(h(\alpha))\leq 5n + m + 2$ (with equality only if the input instance of $\sat$ is satisfiability) or to a string $h(\alpha)$ with $\iota(h(\alpha))\leq (n+m)^6$. Therefore, consider the instance of $\simStrictMatch$ with input the pattern $\alpha$ constructed in the reduction, $w=(\mathtt{10\dollar\hashtag})^{5n+m+3}$, and $k=5n+m+2$. Clearly, there exists a substitution $h$ with $h(\alpha) \sim_k w$ and $h(\alpha) \not\sim_{k+1} w$ if and only if there exists a substitution $h$ with $\iota(h(\alpha))=k$. Such a substitution exists if and only if the given instance of $\sat$ is satisfiable.
    \qed
\end{proof}

We can also show an \NP-upper bound: it is enough to consider as candidates for the images of the variables under the substitution $h$ only strings of length $O((k+1)^\sigma)$; longer strings can be replaced with shorter, $\sim_k$-congruent ones, which have the same impact on the sets $\subseqset{k}{h(\alpha)}$. The following holds.
\begin{restatable}[]{theorem}{refsimStrictMatchNPC}\label{thm:simStrictMatchNPC}
$\simmatch$ and $\simStrictMatch$ are \NP-complete.
\end{restatable}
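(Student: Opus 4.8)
The plan is to complete the proof of Theorem~\ref{thm:simStrictMatchNPC} by combining the already-established hardness results with matching upper bounds. Hardness is immediate: Lemma~\ref{lem:simmatchNPH} gives $\NP$-hardness of $\simmatch$, and Lemma~\ref{lem:simStrictMatchNPH} gives $\NP$-hardness of $\simStrictMatch$. So the entire burden of the proof is to show that both problems lie in $\NP$. I would handle $\simStrictMatch$ as the main case, since the $\simmatch$ upper bound follows by essentially the same (and slightly simpler) argument, or alternatively by observing that $\simmatch(\alpha,w,k)$ is true iff either $\simStrictMatch(\alpha,w,k)$ is true or $\simStrictMatch(\alpha,w,k')$ is true for some $k'>k$ witnessed by a short substitution — but the cleanest route is just to run the same certificate-guessing argument for both.

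The key idea, as the paragraph preceding the theorem hints, is a pumping/normal-form lemma for substitutions under $\sim_k$: if $h(\alpha)\sim_k w$, then there is a substitution $h'$ with $h'(\alpha)\sim_k w$ (and moreover $\subseqset{k}{h'(\alpha)}=\subseqset{k}{h(\alpha)}$ and $\subseqset{k+1}{h'(\alpha)}=\subseqset{k+1}{h(\alpha)}$, so that strictness is also preserved) in which every variable image $h'(x)$ has length $O((k+1)^\sigma)$. First I would invoke the standard fact (from the theory of $\sim_k$, e.g.\ via the Simon congruence automaton / the result that $\Sigma^{\le k}$ has polynomially many $\sim_{k+1}$-classes when $\sigma$ is constant — there are at most $(k+2)^{\sigma}$-ish many, and each class has a representative of length $\le (k+1)\sigma$ or so) that for any string $u$ there is a string $u'$ with $|u'|\le C(k+1)^\sigma$ for a constant $C=C(\sigma)$ and $u'\sim_{k+1}u$. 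The crucial step is then to argue that we may replace each $h(x)$ by such a short $\sim_{k+1}$-equivalent representative simultaneously: because $\sim_{k+1}$ is a congruence (compatible with concatenation), replacing all occurrences of each variable image by a $\sim_{k+1}$-equivalent word yields $h'(\alpha)\sim_{k+1}h(\alpha)$, hence also $h'(\alpha)\sim_k h(\alpha)\sim_k w$, and $h'(\alpha)\sim_{k+1}w \iff h(\alpha)\sim_{k+1}w$; so both the $\sim_k$-match and the $\not\sim_{k+1}$-condition are preserved. This gives a certificate — the list of short images $h'(x)$ for $x\in\var(\alpha)$, of total size polynomial in $n$, $m$, $\log k$ (well, in $(k+1)^\sigma\cdot n$, which is polynomial since $\sigma$ is constant and $k\le n$) — and the verifier simply computes $h'(\alpha)$, then tests $h'(\alpha)\sim_k w$ and $h'(\alpha)\not\sim_{k+1}w$ using the known polynomial-time algorithm for testing Simon $j$-congruence of two explicitly given words~\cite{BarkerFHMN20,GawrychowskiKKM21}.

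For the size bookkeeping: with $\sigma\in O(1)$ and $k\le n$, each short image has length $O(n^\sigma)$, there are at most $m$ distinct variables, so the certificate has size $O(m\, n^\sigma)$, which is polynomial; and $|h'(\alpha)|\le |\alpha|\cdot O(n^\sigma)$ is polynomial, so the two congruence tests run in polynomial time. Thus $\simStrictMatch\in\NP$, and the same argument (dropping the $\not\sim_{k+1}$ check, using $\sim_k$-representatives of length $O((k+1)^\sigma)$ directly) gives $\simmatch\in\NP$. Together with the hardness lemmas this yields $\NP$-completeness of both.

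I expect the main obstacle to be making the normal-form/pumping lemma precise — specifically, pinning down the exact polynomial bound on the length of a shortest $\sim_{k+1}$-representative of an arbitrary word over a constant alphabet, and being careful that the simultaneous replacement of variable images indeed commutes with the morphism $h$ (this is just congruence-compatibility of $\sim_{k+1}$ with concatenation, applied to each occurrence, but it should be spelled out, noting that constants in $\alpha$ are left untouched and a variable may occur multiple times, all occurrences being replaced by the same short image). A minor secondary point is confirming that the bound stays polynomial under the paper's convention $k\le n$ and $\sigma\in O(1)$; this is where the "$\sigma$ constant" assumption is essential, exactly as the surrounding text emphasises. Everything else — guessing the certificate, running the known $\sim_j$-tester — is routine.
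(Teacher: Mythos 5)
Your proposal is correct and follows essentially the same route as the paper: bound each variable image by a short $\sim_{k+1}$-equivalent representative of length at most $(k+1)^\sigma$ (the paper cites Kim et al.~\cite{KimHKS22} for the $\binom{k+\sigma}{\sigma}\le k^\sigma$ bound and uses the congruence property via $\subseqset{k}{w_1w_2}=\Sigma^{\le k}\cap\subseqset{k}{w_1}\subseqset{k}{w_2}$), then guess these short images as a polynomial-size certificate and verify with the known linear-time $\sim_\ell$-tester. The only cosmetic difference is that the paper separately dispatches the boundary case $k\ge|w|$ by reduction to exact $\match$ (resp.\ answering no for the strict variant), which your uniform bound handles implicitly.
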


\begin{proof}
By Lemmas \ref{lem:simmatchNPH} and \ref{lem:simStrictMatchNPH}, it is enough to show that both problems are in \NP.

%$\simmatch$ and $\simStrictMatch$
We make some observations first. Note that $\subseqset{k}{w_1w_2}=\Sigma^{\le k}
\cap\subseqset{k}{w_1}\subseqset{k}{w_2}$. Thus, for a pattern~$\alpha$ and two substitutions~$h_1$ and $h_2$ where $h_1(x)\sim_kh_2(x)$ for all variables~$x\in {\mathcal X}$, we have $w\sim_kh_1(\alpha)$ if and only if $w\sim_kh_2(\alpha)$. Moreover, Kim et al.~\cite{KimHKS22} showed that, for a given string~$w$, the length of the shortest string in the set~$\{u\in\Sigma^*\mid u\sim_kw\}$
is at most $\binom{k+\sigma}{\sigma}\le k^{\sigma}$.

Based on these observations, we can now give \NP-algorithms for both problems. We note that these problems reduce to the normal pattern matching problem when $k\ge |w|$. For $\simmatch$, if $k\geq |w|$, we answer $\simmatch(\alpha,w,k)$ positively if and only if $\alpha$ matches $w$. For $\simStrictMatch$, if $k\geq |w|$, we always answer $\simStrictMatch(\alpha,w,k)$ negatively. Indeed, if there exists $h$ such that $h(\alpha)\sim_k w$, then $h(\alpha)\sim_{|w|} w$. It follows that $h(\alpha)=w$ and $h(\alpha)\sim_{k+1} w$, as well; the answer to $\simStrictMatch(\alpha,w,k)$ should therefore be no. 
 
Hence, from now on, we can assume that $k<|w|$. Let us consider first the problem $\simStrictMatch$. From the observations we have made at the beginning of this proof, and taking into account that we need to consider strings congruent under $\sim_{k+1}$, we can conclude that there exists a substitution $h$ such that $h(\alpha)\sim_k w$ and $h(\alpha)\not\sim_{k+1} w$ if and only if there exists such a substitution $h$ where the length of the image of each variable is $(k+1)^{\sigma}$. 

Since $k$ is at most $|w|$ and $\sigma$ is a constant, we only need to test certificates of polynomial length which encode the substitution. The verifier can then simply substitute the variables in the pattern, which will yield a string of length at most~$(k+1)^\sigma |\alpha|$. Now, we can compute the largest $\ell$ for which $h(\alpha)\sim_{\ell} w$ in $O(|h(\alpha)|+|w|) =O((|w|+1)^\sigma|\alpha|+|w|)$ time~\cite{GawrychowskiKKM21}, which is polynomial in the size of the input, under the assumption that $\sigma$ is constant. If $\ell=k$, then we answer the respective instance positively. 
Therefore, the problem is in \NP.

A similar argument holds for $\simmatch$ (but, in that case, it is enough to look for substitutions where the image of the variables is at most $k^\sigma$, as we only deal with $\sim_k$).
On the other hand, note that the same argument cannot be applied to $\univmatch$, because there is no bound on the size of $k$. This makes the size of the certificate, $k^\sigma$, exponentially large in $\log k$, which is the size of the encoding for a binary representation of $k$.

\qed
\end{proof}

Finally, note that $\simmatch$ and $\simStrictMatch$ are in \P~when the input pattern is regular.
%(there are no variables occurring twice in that pattern). 
\begin{restatable}[]{proposition}{refsimStrictregular}\label{prop:simStrictregular}
If $\alpha$ is a regular pattern, then both problems
    $\simmatch(\alpha,w,k)$ and $\simStrictMatch(\alpha,w,k)$ are in \P.
\end{restatable}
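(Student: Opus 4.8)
The plan is to show that, for a regular pattern $\alpha = w_0 x_1 w_1 x_2 \cdots x_r w_r$ (where each $w_i \in \Sigma^*$ and the variables $x_1, \ldots, x_r$ are pairwise distinct), one can decide both $\simmatch(\alpha, w, k)$ and $\simStrictMatch(\alpha, w, k)$ in polynomial time by a dynamic-programming computation over the structure used in the proof of Theorem~\ref{thm:simStrictMatchNPC}. The key observation is that, since each variable occurs exactly once, the image $h(x_i)$ of each variable can be chosen independently; and by the argument in the proof of Theorem~\ref{thm:simStrictMatchNPC}, we only care about each $h(x_i)$ up to $\sim_{k+1}$-congruence (for $\simStrictMatch$) or $\sim_k$-congruence (for $\simmatch$). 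The difficulty is that the number of $\sim_{k+1}$-classes is not polynomial, so we cannot simply enumerate them; instead I would track the relevant information about prefixes of $h(\alpha)$ via the subsequence-automaton / Simon-congruence machinery.

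First I would recall the standard tool for $\sim_k$: the set $\subseqset{k}{u}$ of a word $u$ is completely determined by the state reached in the minimal DFA $\mathcal{A}_{k}$ recognizing $\subseqset{k}{w}$ (equivalently, by the canonical form of $u$ under $\sim_k$ relative to $w$) — but since that automaton can be large, I would instead use the fact from \cite{GawrychowskiKKM21} that, given $u$ and $v$, the largest $\ell$ with $u \sim_\ell v$ is computable in linear time, together with the bound from \cite{KimHKS22} that every $\sim_k$-class (resp. $\sim_{k+1}$-class) contains a representative of length at most $(k+1)^\sigma$. The core combinatorial point is: the effect of appending a word $z$ to a fixed prefix $p$, as measured by $\subseqset{k}{pz}$, depends on $z$ only through $\subseqset{k}{z}$, hence only through the $\sim_k$-class of $z$; so I would build the answer incrementally along $\alpha$, maintaining at each stage a representative of the $\sim_{k}$ (resp. $\sim_{k+1}$) class of the current prefix of $h(\alpha)$. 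Concretely: start with $p_0 = w_0$; at step $i$, guess a short representative $z_i$ (length $\le (k+1)^\sigma$) of the class of $h(x_i)$, set $p_i$ to be a canonical short representative of the class of $p_{i-1} z_i w_i$, and at the end compare $p_r$ with $w$.

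To make this a polynomial-time (deterministic) algorithm rather than merely an $\mathsf{NP}$ argument, I would exploit that for a regular pattern there is a canonical greedy choice for each $h(x_i)$: since $x_i$ occurs only once, the "most useful" image is one whose set of length-$k$ subsequences is as large as possible, namely a sufficiently long block $(1\cdots\sigma)^{k}$ (or $(1\cdots\sigma)^{k+1}$), which makes $\subseqset{k}{h(x_i)} = \Sigma^{\le k}$; and the "least useful" is $h(x_i) = \varepsilon$. Between these extremes, the reachable classes of $p_{i-1} h(x_i) w_i$ form a structured family parameterized by how many arches $h(x_i)$ contributes and by the letter-set of its final partial arch — exactly the subsequence universality signature data of Definition (and the pump-up/pump-down Lemma~\ref{lem:marginalPumpDownAndUp}). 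So I would enumerate, for each variable position, the $O(1)$ many relevant signatures of $h(x_i)$ plus a single integer parameter (the number of extra arches), and propagate a polynomial-size set of reachable states of the word-$w$ subsequence automaton along $\alpha$; for $\simmatch$ we accept if $w$'s class is reachable, and for $\simStrictMatch$ we additionally require $h(\alpha) \not\sim_{k+1} w$, which — as in the proof of Theorem~\ref{thm:simStrictMatchNPC} — forces the verification to run the computation simultaneously for index $k$ and index $k+1$ and check that the two answers differ appropriately.

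The main obstacle I anticipate is handling the interaction between the fixed constant blocks $w_i$ and the chosen variable images when $k$ is large relative to $|w|$ versus small: specifically, controlling the reachable $\sim_k$-classes of $p_{i-1} h(x_i) w_i$ as $h(x_i)$ ranges over all of $\Sigma^*$ and showing this set is computable and of polynomial size. The clean way around this is again to observe that we never need $|h(x_i)|$ to exceed $(k+1)^\sigma$ and that, for $k \ge |w|$, the problem degenerates (as already noted in the proof of Theorem~\ref{thm:simStrictMatchNPC}, $\simStrictMatch$ is then trivially negative and $\simmatch$ reduces to exact $\match$, which is in $\mathsf{P}$ for regular patterns by \cite{DBLP:journals/toct/FernauMMS20}); so we may assume $k < |w|$, the automaton $\mathcal{A}_k$ (resp. $\mathcal{A}_{k+1}$) for $w$ has $O(|w|)$ states, and the whole dynamic program runs over these states in time polynomial in $|\alpha|$, $|w|$, and $\sigma$. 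I would close by remarking that this simultaneously settles both problems, since the $\simStrictMatch$ version is obtained from the $\simmatch$ one by the same "run at $k$ and at $k+1$" device used for Theorem~\ref{thm:simStrictMatchNPC}.
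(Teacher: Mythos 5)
Your final algorithm is essentially the paper's: the paper builds the NFA $N_\alpha$ for $L(\alpha)=w_0\Sigma^*w_1\cdots\Sigma^*w_\ell$, the DFA $D_{w,k}$ for the $\sim_k$-class of $w$ via \cite{KimHKS22}, and (for the strict variant) the complement of $D_{w,k+1}$, then tests emptiness of the intersection; your dynamic program that propagates ``a polynomial-size set of reachable states of the word-$w$ subsequence automaton along $\alpha$,'' treating each variable position as ``reach anything reachable by some word,'' is exactly this product construction unrolled, and your ``run at $k$ and at $k+1$'' device for $\simStrictMatch$ is the same as intersecting with the complement automaton. Two caveats. First, the load-bearing fact is the existence of a polynomial-size DFA for $\{u : u\sim_k w\}$ (for constant $\sigma$); the paper gets this from \cite{KimHKS22}, whereas you assert ``$\mathcal{A}_k$ has $O(|w|)$ states'' without justification --- the polynomial bound is what you need and should be cited, and the linear bound is not obviously true. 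Second, your intermediate claim that the reachable $\sim_k$-classes of $p_{i-1}h(x_i)w_i$ are ``parameterized by \dots exactly the subsequence universality signature data'' of $h(x_i)$ is incorrect: $\sim_k$-equivalence is strictly finer than having equal universality signatures (signatures record only arch-factorisation data, which determines universality, not $\subseqset{k}{\cdot}$), so this detour through Lemma~\ref{lem:marginalPumpDownAndUp} would not yield a sound enumeration. Fortunately your proof does not actually rest on it --- once you switch to tracking automaton states, the signature machinery is dead weight --- but as written that paragraph should be deleted rather than repaired. With those two points fixed, the argument is correct and coincides with the paper's.
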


\begin{proof}
    We consider the problem $\simmatch(\alpha,w,k)$. We assume that $\alpha$ is a regular pattern $\alpha = w_0x_1 w_1\cdots x_\ell w_\ell$, where, for $i\in [\ell]$, $x_i$ is a variable and, for $i\in [\ell]\cup\{0\}$, $w_i$ is a string of constants. The language $L(\alpha)$ of all words which can be obtained by replacing the variables of $\alpha$ by constant strings is regular, and we can construct in polynomial time a non-deterministic finite automaton $N_\alpha$ accepting it (it is the automaton accepting the language described by the regular expression $w_0\Sigma^* w_1\cdots \Sigma^* w_\ell$). Now, using the results of \cite{KimHKS22}, we can construct in polynomial time (when the size of the input alphabet $\sigma$ is constant) a deterministic finite automaton $D_{w,k}$ accepting the words which are $\sim_k$ equivalent to $w$. Now, we simply check if there is a word accepted by both these automata ($N_\alpha$ and $D_{w,k}$), which can be done in polynomial time. We return the answer to this check as the answer to $\simmatch(\alpha,w,k)$.

    Further, we consider the problem $\simStrictMatch(\alpha,w,k)$. Just as before, we construct the NFA $N_\alpha$ and the DFA $D_{w,k}$. Moreover, we construct the DFA $D_{w,k+1}$ and its complement $D'_{w,k+1}$ (which accepts the words which are not $\sim_{k+1}$ equivalent to $w$). Now, we see if there is a word accepted by $N_\alpha$ and $D_{w,k}$ and $D'_{w,k+1}$. Clearly, all steps can be done in polynomial time. We return the answer to this check as the answer to the problem $\simStrictMatch(\alpha,w,k)$.
    \qed
\end{proof}

\section{$\simWE$}\label{section:wordEqSimk}

In this section, we address the $\simWE$ problem, where we are given two patterns $\alpha$ and $\beta$, and a natural number $k\leq n$, and we want to check the existence of a substitution $h$ with $h(\alpha)\sim_k h(\beta)$. The first result is immediate: this problem is \NP-hard because $\simmatch$, which is a particular case of $\simWE$, is \NP-hard. 

To show that the problem is in \NP, we need a more detailed analysis. If $k\leq |\alpha|+|\beta|$, the same proof as for the \NP-membership of $\simmatch$ works: it is enough to look for substitutions of the variables with the image of each variable having length at most $k^\sigma$, and this is polynomial in the size of the input. If $k>|\alpha|+|\beta|$, and $\beta=w$ contains no variable, then this is an input for $\simmatch$ with $k$ greater than the length of the input word $w$, and we have seen previously how this can be decided. Finally, if both $\alpha$ and $\beta$ contain variables, then the problem is trivial, irrespective of $k$: the answer to any input is positive, as we simply have to map all variables to $(1\cdots \sigma)^k$ and obtain two $\sim_k$-congruent words. Therefore, we have the following result. 
\begin{theorem}
$\simWE$ is \NP-complete.
\end{theorem}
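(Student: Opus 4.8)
The plan is to prove the two directions of the claim separately: the hardness is immediate, while \NP-membership needs a short case analysis on the magnitude of $k$. \textbf{Hardness.} I would simply observe that $\simmatch$ is the special case of $\simWE$ obtained by taking $\beta$ to be a variable-free word; since $\simmatch$ is \NP-hard by Lemma~\ref{lem:simmatchNPH}, so is $\simWE$. Nothing more is required here.

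\textbf{Membership, the case $k\le m+n$.} Here I would reuse the argument behind Theorem~\ref{thm:simStrictMatchNPC}. Since $\subseqset{k}{w_1w_2}=\Sigma^{\le k}\cap\subseqset{k}{w_1}\subseqset{k}{w_2}$, replacing the image $h(x)$ of a variable by any string that is $\sim_k$-congruent to it alters neither $\subseqset{k}{h(\alpha)}$ nor $\subseqset{k}{h(\beta)}$; and by the bound of Kim et al.~\cite{KimHKS22} every string is $\sim_k$-congruent to one of length at most $\binom{k+\sigma}{\sigma}\le k^\sigma$. Consequently, it suffices to guess, for each variable, an image of length at most $k^\sigma$; this certificate is of polynomial size because $\sigma\in O(1)$ and $k\le m+n$. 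The verifier substitutes the variables, obtaining words of polynomial length, computes the largest $\ell$ with $h(\alpha)\sim_\ell h(\beta)$ using~\cite{GawrychowskiKKM21}, and accepts iff $\ell\ge k$.

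\textbf{Membership, the case $k>m+n$.} I would split according to whether both patterns contain a variable. If at least one side, say $\beta=w$, is variable-free, then $\simWE(\alpha,w,k)$ coincides with $\simmatch(\alpha,w,k)$ with $k>|w|$, a situation already handled in the proof of Theorem~\ref{thm:simStrictMatchNPC}: $h(\alpha)\sim_k w$ forces $h(\alpha)=w$, so the answer is yes iff $\alpha$ matches $w$ exactly, which is in \NP\ (it is $\match$); and if $\alpha$ is variable-free too it is just a $\sim_k$-test on two fixed strings. If instead both $\alpha$ and $\beta$ contain a variable, the instance is unconditionally positive: mapping every variable to $(1\cdots\sigma)^k$ makes both $h(\alpha)$ and $h(\beta)$ contain $(1\cdots\sigma)^k$ as a factor, hence $k$-universal, so $\subseqset{k}{h(\alpha)}=\Sigma^{\le k}=\subseqset{k}{h(\beta)}$ and $h(\alpha)\sim_k h(\beta)$; the verifier answers yes with no guessing.

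The only genuinely delicate point — and hence the expected obstacle — is recognising that the $k^\sigma$ certificate bound stops being polynomial once $k$ is allowed to exceed $m+n$, and noticing that in exactly that regime the problem degenerates, either into a large-$k$ instance of $\simmatch$ over a fixed word (so exact matching) or into a trivially satisfiable instance. Once the regimes are separated correctly, each branch is routine, and combining hardness with membership yields that $\simWE$ is \NP-complete.
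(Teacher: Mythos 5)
Your proposal is correct and follows essentially the same route as the paper: \NP-hardness via the special case $\simmatch$, and \NP-membership by splitting on whether $k$ exceeds $|\alpha|+|\beta|$, using the $k^\sigma$-length certificates from the $\simStrictMatch$ argument in the small-$k$ regime and observing that the large-$k$ regime degenerates into exact matching (one side variable-free) or a trivially positive instance (both sides contain variables). No gaps to report.
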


To avoid the trivial cases arising in the above analysis for $\simWE$, we can also consider a stricter variant of this problem:
\begin{problemdescription}
  \problemtitle{Word Equations under Strict Simon's Congruence: $\simStrictWE(\alpha,\beta,k)$}
  \probleminput{Patterns $\alpha$, $\beta$, $|\alpha|=m$, $\beta=n$, and $k\in [m+n]$.}
  \problemquestion{Is there a substitution $h$ with $h(\alpha) \sim_k h(\beta)$ and $h(\alpha) \not\sim_{k+1} h(\beta)$?}
\end{problemdescription}

Differently from $\simWE$, we can show that this problem is \NP-hard, even in the case when both sides of the pattern contain variables. 
\begin{restatable}[]{lemma}{refsimStrictWENPH}\label{lem:simStrictWENPH}
$\simStrictWE$ is \NP-hard, even if both patterns contain variables.
\end{restatable}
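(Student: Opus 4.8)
The plan is to reduce from $\univmatch$, or rather directly from $\sat$, reusing the gadget construction of Lemma~\ref{lemma:matchUnivnphard}, but now distributing the ``target'' across both sides of the equation so that we cannot trivially satisfy the instance by mapping all variables to $(1\cdots\sigma)^k$. Let $\varphi$ be a $\sat$-instance and let $\alpha$ be the pattern produced by the reduction in Lemma~\ref{lemma:matchUnivnphard}, so that (over fresh variables) $\alpha$ admits a substitution $h$ with $\iota(h(\alpha))=5n+m+2=:k$ if and only if $\varphi$ is satisfiable, and otherwise every substitution gives either $\iota(h(\alpha))<k$ or $\iota(h(\alpha))\ge N^6$. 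The idea is to put $\alpha$ on the left side and a ``fixed target'' on the right side, realised by a pattern $\beta$ over a second, disjoint set of variables, whose image under any substitution has a controllable universality index. Concretely, one natural choice is $\beta = y\,(1\cdots\sigma)^{k}$ for a single fresh variable $y$: this is a regular pattern, and the same binarisation-style argument shows that to keep $\iota(h(\beta))$ small we should avoid ``bad'' letters; more robustly, we can take $\beta$ to be itself a ($\sat$-free) copy of the universality gadget chain that is forced to have universality index exactly $k$ whenever its variables avoid the blow-up, e.g. $\beta=\pi_{\hashtag}'\pi_{\dollar}'(\dots)$ tuned so the only reachable values are ``$<k$'', ``$=k$'', or ``$\ge N^6$''.

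The key step is then to argue the equivalence. Since $\sim_k$-congruent words that are both $k$-universal are also $\sim_k$-congruent to $(1\cdots\sigma)^k$, and $h(\alpha)\sim_k h(\beta)$ together with $h(\alpha)\not\sim_{k+1}h(\beta)$ forces $\iota(h(\alpha))=\iota(h(\beta))$ in exactly the regime we engineered (recall that two words are $\sim_j$-equivalent for all $j\le\min(\iota(u),\iota(v))$ once both are that-universal, so strict congruence at level $k$ pins the common universality index to be $k$ unless one side is non-universal at level $k$). The forward direction: if $\varphi$ is satisfiable, pick $h$ on $\alpha$'s variables giving $\iota(h(\alpha))=k$ and $h$ on $\beta$'s variables giving $\iota(h(\beta))=k$; then both sides are $k$-universal but, by a careful choice of the rest after the last arch on each side, neither is $(k{+}1)$-universal, and any two such words are $\sim_k$ but can be made $\not\sim_{k+1}$ — here one must be slightly careful and may need to tune the ``rest'' blocks of $\alpha$ and $\beta$ so that $\subseqset{k{+}1}{h(\alpha)}\neq\subseqset{k{+}1}{h(\beta)}$ genuinely holds. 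The backward direction: given $h$ with $h(\alpha)\sim_k h(\beta)$ and $h(\alpha)\not\sim_{k+1}h(\beta)$, the gadget analysis of Lemma~\ref{lemma:matchUnivnphard} shows $\iota(h(\alpha))\in\{<k\}\cup\{k\}\cup\{\ge N^6\}$, and symmetrically for $\beta$; strict $k$-congruence rules out the blow-up case (that would force $\iota(h(\beta))$ huge too, but then they'd be $\sim_{k+1}$) and the ``$<k$'' case (then they'd fail to be $\sim_k$ unless both are $<k$-universal, which the construction of $\beta$ forbids), leaving $\iota(h(\alpha))=k$, whence $\varphi$ is satisfiable.

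The main obstacle I expect is the strictness bookkeeping: ensuring that in the ``yes'' case we can simultaneously achieve $h(\alpha)\sim_k h(\beta)$ \emph{and} $h(\alpha)\not\sim_{k+1}h(\beta)$, and in the ``no'' case that no substitution accidentally makes both sides $(k{+}1)$-universal (which would make them $\sim_{k+1}$ and sink the instance differently than intended). This is handled by designing the trailing ``rest'' of $\alpha$ and of $\beta$ to contribute distinct length-$(k{+}1)$ subsequences — for instance, let $\alpha$ end in a block forcing a specific missing letter pattern at level $k{+}1$ while $\beta$ ends differently — and by verifying, using the arch-factorisation facts from the Preliminaries, that the only way to exceed universality $k$ on either side is the already-excluded $N^6$ blow-up. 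Everything else (polynomial size of $\alpha,\beta$; $k\le m+n$ after relabelling; the reduction being computable in polynomial time) is inherited directly from Lemma~\ref{lemma:matchUnivnphard}. Membership in \NP\ is not claimed here and is deferred to the analysis following this lemma.
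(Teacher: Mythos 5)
Your high-level strategy is the same as the paper's (reuse the $\sat$ reduction of Lemma~\ref{lemma:matchUnivnphard} for $\alpha$, and put a universality ``probe'' with a fresh variable on the right-hand side), but your concrete choice of $\beta$ breaks the backward direction, and the ``strictness bookkeeping'' you flag as the main obstacle is a symptom of that wrong choice rather than an intrinsic difficulty. With $\beta=y(1\cdots\sigma)^k$ (or any $\beta$ engineered so that $\iota(h(\beta))$ can equal exactly $k$), an \emph{unsatisfiable} $\varphi$ still yields a yes-instance: map $\alpha$'s variables so as to trigger the blow-up, giving $\iota(h(\alpha))\ge N^6>k+1$, and set $h(y)=\varepsilon$, so $\iota(h(\beta))=k$. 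Both sides are then $k$-universal, hence $h(\alpha)\sim_k h(\beta)$, while $h(\alpha)$ is $(k{+}1)$-universal and $h(\beta)$ is not, hence $h(\alpha)\not\sim_{k+1}h(\beta)$ --- a false positive. Your attempt to exclude this (``the blow-up would force $\iota(h(\beta))$ huge too'') is incorrect, since $\beta$'s variables are disjoint from $\alpha$'s; and your intermediate claim that strict congruence at level $k$ pins the common universality index to $k$ is also false (a $(k{+}5)$-universal word and an exactly $k$-universal word are $\sim_k$ and $\not\sim_{k+1}$).

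The missing idea, which is exactly what the paper does, is to make $h(\beta)$ \emph{unconditionally} $(k{+}1)$-universal: take $\beta=(\mathtt{10\dollar\hashtag})^{k+1}x$ with $x$ a fresh variable, so the constant part alone guarantees $\iota(h(\beta))\ge k+1$ for every $h$. Then $h(\alpha)\sim_k h(\beta)$ iff $h(\alpha)$ is $k$-universal, and $h(\alpha)\not\sim_{k+1}h(\beta)$ iff $h(\alpha)$ is not $(k{+}1)$-universal; together these are equivalent to $\iota(h(\alpha))=k$, i.e., to satisfiability of $\varphi$. In particular the blow-up case is killed because both sides become $(k{+}1)$-universal and hence $\sim_{k+1}$, and no tuning of trailing ``rest'' blocks or comparison of $(k{+}1)$-length subsequence sets is needed. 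I recommend replacing your $\beta$ accordingly and deleting the rest-tuning discussion.
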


\begin{proof}
    We refer to the notations from Lemma \ref{lemma:matchUnivnphard}. We use the same reduction from $\sat$ and note that $\alpha$ can either be mapped to a string $h(\alpha)$ with $\iota(h(\alpha))\leq 5n + m + 2$ (with equality only if the input instance of $\sat$ is satisfiability) or to a string $h(\alpha)$ with $\iota(h(\alpha))\leq (n+m)^6$. Therefore, consider the instance of $\simStrictWE$ with input the pattern $\alpha$ constructed in the reduction, the second pattern $\beta=(\mathtt{10\dollar\hashtag})^{5n+m+3} x$, where $x$ is a fresh string-variable, and $k=5n+m+2$. Clearly, there exists a substitution $h$ with $h(\alpha) \sim_k h(\beta)$ and $h(\alpha) \not\sim_{k+1} h(\beta)$ if and only if there exists a substitution $h$ with $\iota(h(\alpha))=k$. Such a substitution exists if and only if the given instance of $\sat$ is satisfiable.
    \qed
\end{proof}

Regarding the membership in \NP: if $k$ is upper bounded by a polynomial function in $|\alpha|+|\beta|$ (or, alternatively, if $k$ is given in unary representation), then the fact that $\simStrictWE$ is in \NP~follows as in the case of $\simStrictMatch$. The case when $k$ is not upper bounded by a polynomial in $|\alpha|+|\beta|$ remains open. We can show the following theorem.\looseness=-1
\begin{theorem}\label{lem:simStrictWENPC}
$\simStrictWE$ is \NP-complete, for $k\leq |\alpha|+|\beta|$. 
\end{theorem}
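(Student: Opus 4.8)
The plan is to reuse the machinery already developed for $\simStrictMatch$ in Theorem~\ref{thm:simStrictMatchNPC}. Since Lemma~\ref{lem:simStrictWENPH} gives $\NP$-hardness, it remains to show membership in $\NP$ under the hypothesis $k\le |\alpha|+|\beta|$. First I would recall the two key observations from the proof of Theorem~\ref{thm:simStrictMatchNPC}: (i) $\subseqset{k}{w_1w_2}=\Sigma^{\le k}\cap \subseqset{k}{w_1}\subseqset{k}{w_2}$, so replacing the image of a variable by any $\sim_{k+1}$-congruent string preserves both $h(\alpha)\sim_k h(\beta)$ and $h(\alpha)\sim_{k+1} h(\beta)$; and (ii) by Kim et al.~\cite{KimHKS22}, every string has a $\sim_{k+1}$-congruent string of length at most $\binom{k+1+\sigma}{\sigma}\le (k+1)^\sigma$. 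Combining these, one concludes that if a witnessing substitution $h$ exists at all, then one exists in which each variable is mapped to a string of length at most $(k+1)^\sigma$.

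Next I would argue the size bound on the certificate. Since $k\le |\alpha|+|\beta|$ and $\sigma\in O(1)$, each variable image has length polynomial in $|\alpha|+|\beta|$, and there are at most $|\alpha|+|\beta|$ distinct variables across the two patterns; hence the total certificate size (the list of variable images) is polynomial in the input size. The verifier guesses this substitution $h$, applies it to both $\alpha$ and $\beta$ to obtain $h(\alpha)$ and $h(\beta)$, each of length at most $(k+1)^\sigma(|\alpha|+|\beta|)$, and then computes the largest $\ell$ with $h(\alpha)\sim_\ell h(\beta)$ using the linear-time algorithm of Gawrychowski et al.~\cite{GawrychowskiKKM21}, which runs in time $O(|h(\alpha)|+|h(\beta)|)$, polynomial in the input size when $\sigma$ is constant. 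The verifier accepts iff $\ell = k$, i.e.\ iff $h(\alpha)\sim_k h(\beta)$ and $h(\alpha)\not\sim_{k+1} h(\beta)$.

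The only genuine subtlety — and the step I expect to need the most care — is justifying that it suffices to consider images of length at most $(k+1)^\sigma$ \emph{simultaneously} for variables occurring in both $\alpha$ and $\beta$: one must check that shortening the image of a single variable via observation (ii) to a $\sim_{k+1}$-congruent string leaves $h(\alpha)$ in the same $\sim_{k+1}$-class and $h(\beta)$ in the same $\sim_{k+1}$-class, which follows from (i) applied repeatedly at each occurrence of that variable in each pattern, and then iterating over all variables. Since $\sim_{k+1}$ refines $\sim_k$, the truncated substitution witnesses the strict condition exactly when the original one does. This establishes membership in $\NP$, and together with Lemma~\ref{lem:simStrictWENPH} proves the theorem. \qed
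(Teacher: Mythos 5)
Your proof is correct and follows essentially the same route as the paper: \NP-hardness is taken from Lemma~\ref{lem:simStrictWENPH}, and \NP-membership is obtained by reusing the certificate-size argument from the proof of Theorem~\ref{thm:simStrictMatchNPC}, noting that $k\le|\alpha|+|\beta|$ makes the $(k+1)^\sigma$ bound on variable images polynomial in the input. Your extra care about shortening the images of variables shared between $\alpha$ and $\beta$ (via $\sim_{k+1}$-congruent replacements preserving both congruence classes) is a valid elaboration of what the paper leaves implicit.
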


\section{Conclusions}
In this paper, we have considered the problem of matching patterns with variables under Simon's congruence. More precisely, we have considered three main problems $\univmatch$, $\simmatch$, and $\simWE$ and we have given a rather comprehensive image of their computational complexity. These problems are \NP-complete, in general, but have interesting particular cases which are in \P. Interestingly, our \NP~or \P~algorithms work in (non-deterministic) polynomial time only in the case of constant input alphabet (their complexity being, in fact, exponential in the size $\sigma$ of the input alphabet). It seems very interesting to characterize the parameterised complexity of these problems w.r.t. the parameter $\sigma$. In the light of Proposition \ref{rem:oneVarOneOcc}, another interesting parameter to be considered in such a parameterised complexity analysis would be the number of variables. We conjecture that the problems are $W[1]$-hard with respect to both these parameters. \looseness=-1

%Another interesting problem which remains open is to characterise the complexity of $\simStrictWE$ when the input number $k$ is much larger (e.g., exponentially larger) than $|\alpha|+|\beta|$. 

\newpage

\bibliographystyle{splncs04}
\bibliography{references}

\end{document}